\DeclareMathOperator\supp{supp}
\DeclareMathOperator\inter{int}
\DeclareMathOperator\conv{co}
\DeclareMathOperator\clos{cl}
\newcommand{\dd}{\mathrm{d}}
\newcommand{\ddd}{\, \mathrm{d}}
\newcommand{\beliefm}{\beta}
\newtheorem{theorem}{Theorem}
\newtheorem{proposition}{Proposition}
\newtheorem{lemma}{Lemma}
\newtheorem{claim}{Claim}
\newtheorem{observation}{Observation}
\newtheorem{corollary}{Corollary}
\theoremstyle{definition}
\newtheorem{definition}{Definition}
\newcommand{\sep}[0]{ \; |\; }
\newcommand{\cups}[0]{ \cup \ldots \cup }
\definecolor{MyBlue}{RGB}{0,91,148}
\definecolor{MyRed}{RGB}{200,15,62}
\definecolor{MyGreen}{RGB}{10,120,90}
\definecolor{MyOrange}{RGB}{255,165,79}
\setlist[enumerate, 1]{
    itemsep   = 0.05em,
    parsep    = 0.1em,
    partopsep = 0.1em,
    topsep    = 0.1em,
}
\setlist[itemize, 1]{
    itemsep   = 0.05em,
    parsep    = 0.1em,
    partopsep = 0.1em,
    topsep    = 0.1em,
    label     = $\bullet$,
}
\setlist[itemize, 2]{
    itemsep   = 0.15em,
    parsep    = 0.4em,
    partopsep = 0.45em,
    label     = --,
}
\setlist[itemize, 3]{
    itemsep   = 0.1em,
    parsep    = 0.4em,
    partopsep = 0.45em,
    label     = $\triangleright$
}
\newcommand{\drawB}[5]{
    \def\Bxstart{#1}
    \def\Bxend{#2}
    \def\Bylevel{#3}
    \def\Bheight{0.06}
    \def\Bycoord{\Bylevel*\Bheight+0.02}
    \def\Bcolor{#4}
    \def\Blabel{#5}
    \node[left,\Bcolor] at (0,{\Bycoord+\Bheight/2}) {{\footnotesize\Blabel}};
    \draw[fill=\Bcolor!15,draw=\Bcolor!80,thick] ({\Bxstart},{\Bycoord}) rectangle ({\Bxend},{\Bycoord+\Bheight});
}
\newcommand{\drawemptyB}[3]{
    \def\Bylevel{#1}
    \def\Bycoord{\Bylevel*\Bheight+0.02}
    \def\Bcolor{#2}
    \def\Blabel{#3}
    \node[left,\Bcolor] at (0,{\Bycoord+\Bheight/2}) {{\footnotesize\Blabel}};
}
\newcommand{\drawmean}[6]{
    \def\Bcolor{#6}
    \def\Bylevel{#5}
    \def\Bmean{{((#2-#1)*(#1+#2)+(#4-#3)*(#3+#4))/(2*(#2-#1+#4-#3))}}
    \def\Bycoord{\Bylevel*\Bheight+0.02}
    \draw[#6,dashed,line width = 0.75mm] ({\Bmean},{\Bycoord}) -- ({\Bmean},{\Bycoord+\Bheight});
}
\title{Withholding Verifiable Information\thanks{We thank
Nageeb Ali,
Hector Chade,
Vincent Crawford,
Yuhta Ishii,
Sergei Izmalkov,
Andreas Kleiner,
Qianjun Lyu,
Jo\~{a}o Ramos,
and various seminar and conference participants for their helpful comments and suggestions.}}
\author{Denis Shishkin\thanks{Department of Economics, University of California San Diego, \href{mailto:dshishkin@ucsd.edu}{\texttt{dshishkin@ucsd.edu}}.} \and Maria Titova\thanks{Department of Economics, Vanderbilt University, \href{mailto:motitova@gmail.com}{\texttt{motitova@gmail.com}}.} \and Kun Zhang\thanks{School of Economics, University of Queensland, \href{mailto:kun@kunzhang.org}{\texttt{kun@kunzhang.org}}.}}
\date{\today}
\begin{document}

\begin{titlepage}
\maketitle
\onehalfspacing  
\thispagestyle{empty}

\begin{abstract}
    We study a class of finite-action disclosure games in which the sender's preferences are state-independent and the receiver's optimal action depends only on the expected state. While receiver-preferred equilibria in these games involve full revelation, other equilibria are less well understood.
    We show that any equilibrium payoff can be obtained with a disclosure strategy corresponding to a partition with a laminar structure that allows pooling of nonadjacent states.
    In a sender-preferred equilibrium, such a structure balances inducing more sender-favorable actions with deterring deviations.
    Leveraging this insight, we identify conditions under which the sender does not benefit from commitment power. We then apply these results to study selling with quality disclosure and influencing voters.
\end{abstract}

\end{titlepage}

\section{Introduction}
A canonical prediction in disclosure games is \emph{unraveling} \citep[e.g.,][]{grossman1981informational,milgrom1981good}.
Specifically, if a sender can credibly prove that a state is highly favorable, he reveals it to induce a higher action from a receiver.
Once the most favorable states have been disclosed, slightly less favorable states must also be revealed to avoid being mistaken for worse ones, and this reasoning continues recursively until all states are revealed.
This full-revelation result hinges on a crucial assumption: the receiver's action space is sufficiently flexible, meaning that she can adjust her action continuously so that any marginal improvement in her belief leads to a strictly higher action.

However, this flexibility assumption often fails in settings where the receiver chooses among finitely many actions---for example, when a consumer decides between a few products or product versions 
or a policymaker chooses among a few policy alternatives.
In such cases, the receiver cannot finely adjust her action in response to small changes in her belief. 
Previous studies \citep[e.g.,][]{giovannoni2007secrecy,titova2025persuasion} have shown that this discreteness may prevent full unraveling and allow the sender to withhold some information---that is, create a scope for pooling. Yet little is known about precisely which states are pooled in equilibrium or about the limits of what can be achieved through verifiable disclosure.

In this paper, we study a disclosure game in which the receiver's preferred action is increasing in the expected state. The only essential difference from \cite{milgrom1981good} is that our receiver's action space is finite. We characterize the equilibrium payoff set, study the sender-preferred equilibrium payoff, and identify sufficient conditions under which the commitment payoff is achieved with verifiable disclosure. We illustrate our results with a motivating example.

\newcommand{\midstep}{p}

\paragraph{Example.} Consider a seller (he) promoting a product to a buyer (she) who chooses whether to buy nothing (action $1$), buy the product (action $2$), or buy the product bundled with an add-on (action $3$).
The players have a common prior that the product quality---which can be interpreted as either the sender's type or the state of the world---is uniformly distributed on $[0,1]$.
The buyer's payoff depends on the posterior expectation of product quality $\omega$ and is such that she optimally buys the bundle if $\mathbb{E}[\omega] \in [\frac34,1]$, only the standalone product if $\mathbb{E}[\omega] \in [\frac12,\frac34]$, and nothing otherwise.
The seller's profit is $0$ if he sells nothing, $1$ if he sells the product with the add-on, and $\midstep\in[0,1]$ if he sells the product only.
To persuade the buyer, the seller can disclose a piece of hard evidence about product quality after privately observing it.
In particular, he can send a message corresponding to any nonempty closed subset of $[0,1]$ containing the true quality $\omega$.
We focus on partitional disclosure strategies associated with some (ordered) partition $\{B_1,B_2,B_3\}$ of $[0,1]$ such that when the product quality is $\omega \in B_i$, the seller sends a message $B_i$, which is interpreted as a recommendation for the buyer to take action $i$.
To determine whether a partition can arise in an equilibrium, one needs to ensure that no player can profitably deviate.
First, the partition must satisfy \emph{obedience} in the sense that the buyer is willing to follow the recommendation.
Second, it must satisfy \emph{revelation proofness} in the sense that the seller is never willing to deviate by revealing the true quality.
One can show that these properties are not only necessary but also sufficient for a partitional strategy to be an equilibrium strategy.

First, note that the seller's equilibrium payoff is bounded from above by his \emph{commitment payoff} in this environment---that is, his maximal expected payoff in the case in which he can commit to disclosing information about $\omega$ using any experiment.
The commitment problem can be seen as a relaxation of the problem of maximizing the seller's payoff across equilibria because the former does not require the disclosure strategy to be revelation-proof.
Therefore, if the commitment payoff is attainable in some equilibrium, this equilibrium must be a seller-preferred equilibrium of the game.

Suppose first that $\midstep = 0$.
Given the observation above, we start by identifying the commitment solution.
Because $\midstep = 0$, the action space is effectively binary and the seller is maximizing the probability of selling the bundle.
Every commitment-optimal experiment corresponds to a partition given by $B_1 = [0,\frac12], B_2 = \varnothing, B_3=[\frac12,1]$ so that the buyer is indifferent between buying and not buying the bundle following message $B_3$.
It is easy to check that this partition is revelation-proof, as no seller type in $B_1$ can induce any action higher than $1$ by deviating.
We have thus found a seller-preferred equilibrium that attains the commitment payoff.

Suppose next that $\midstep = 0.5$.
In this case, there are three nontrivial actions available to the buyer, and the seller faces a trade-off between the likelihood of selling the standalone product and the likelihood of selling the bundle.
It turns out that the unique commitment-optimal partition is given by $B_1 = [0,\frac14], B_2 = [\frac{5}{16},\frac{11}{16}], B_3 = [\frac14,\frac{5}{16}]\cup[\frac{11}{16},1]$ (see \cref{fig:intro}).
Note two important properties of this partition.
First, while this partition must satisfy obedience, it does so only \emph{barely}, in the sense that the means of $B_2$ and $B_3$ are the lowest posterior means compatible with obedience.
Namely, their means are exactly the cutoffs $\frac12$ and $\frac34$ for actions $2$ and $3$, respectively.
Intuitively, if the obedience constraint were slack, the seller would be able to perturb the partition and shift the probability toward higher actions.
Second, in contrast to the case of $\midstep = 0$, the optimal partition here is not monotone in the sense that some elements are not intervals.
Instead, it is associated with a class of \emph{bi-pooling} distributions of posterior means, which are known to be optimal in a general class of linear persuasion problems \citep{kms,absy}.
Such bi-pooling partitions generalize monotone partitions by allowing pooling of nonadjacent types as follows.
Each partitional element $B_i$ is  an interval or consists of two intervals and ``nests'' a unique other interval $B_j$ in the sense that $\conv(B_i) \supseteq B_j$.
Intuitively, such bi-pooling partitions arise as commitment optima because among barely obedient partitions they always allow for optimally resolving the trade-off between the likelihoods of difference actions.
To determine whether the commitment payoff is attainable in equilibrium, we need to verify that this partition is revelation-proof.
It is sufficient to check that no type in $B_2$ can deviate by revealing their type and convincing the buyer to buy the bundle.
Indeed, since the highest type $\frac{11}{16}$ in $B_2$ is below the threshold $\frac34$ of selling a bundle, this partition is a seller-preferred equilibrium partition.

\begin{figure}[ht!]\footnotesize
    \begin{tikzpicture}[scale = 8]
        \draw[|-,thick] (0,0) node[below] {$0$} -- (0.5,0) node[midway, below] {buy nothing} -- (0.5,0);
        \draw[|-,thick] (0.5,0) node[below] {$\frac12$} -- (0.75,0) node[below] {$\frac34$} node[midway,below,align=center] {product \\ only};
        \draw[|-|,thick] (0.75,0) -- (1,0) node[below] {$1$} node[midway,below,align=center] {with\\ add-on};
        
        \drawB{0.5}{1}{2}{MyGreen}{{\(B_3\)}}
        \drawmean{0.5}{1}{0.5}{1}{2}{MyGreen}
    
        \drawemptyB{1}{MyRed}{{\(B_2\)}}
        
        \drawB{0}{0.5}{0}{MyBlue}{{\(B_1\)}}
        \drawmean{0}{0.5}{0}{0.5}{0}{MyBlue}

        \node[right] at (1.1,0.12) {$p=0$, with and without commitment};
    
    \end{tikzpicture}

    \begin{tikzpicture}[scale = 8]
        \draw[|-,thick] (0,0) node[below] {$0$} -- (0.5,0) node[midway, below] {buy nothing} -- (0.5,0);
        \draw[|-,thick] (0.5,0) node[below] {$\frac12$} -- (0.75,0) node[below] {$\frac34$} node[midway,below,align=center] {product \\ only};
        \draw[|-|,thick] (0.75,0) -- (1,0) node[below] {$1$} node[midway,below,align=center] {with\\ add-on};

        \drawB{1/4}{5/16}{2}{MyGreen}{{\(B_3\)}}
        \drawB{11/16}{1}{2}{MyGreen}{{\(B_3\)}}
        \drawmean{1/4}{5/16}{11/16}{1}{2}{MyGreen}

        \drawB{5/16}{11/16}{1}{MyRed}{{\(B_2\)}}
        \drawmean{5/16}{11/16}{5/16}{11/16}{1}{MyRed}
        
        \drawB{0}{1/4}{0}{MyBlue}{{\(B_1\)}}
        \drawmean{0}{1/4}{0}{1/4}{0}{MyBlue}

        \node[right] at (1.1,0.12) {$p=0.5$, with and without commitment};
    
    \end{tikzpicture}

    \def\bb{0.1909830056} 
    \begin{tikzpicture}[scale = 8]
        \draw[|-,thick] (0,0) node[below] {$0$} -- (0.5,0) node[midway, below] {buy nothing} -- (0.5,0);
        \draw[|-,thick] (0.5,0) node[below] {$\frac12$} -- (0.75,0) node[below] {$\frac34$} node[midway,below,align=center] {product \\ only};
        \draw[|-|,thick] (0.75,0) -- (1,0) node[below] {$1$} node[midway,below,align=center] {with\\ add-on};

        \drawB{\bb}{1/4}{2}{MyGreen}{{\(B_3\)}}
        \drawB{3/4}{1}{2}{MyGreen}{{\(B_3\)}}
        \drawmean{\bb}{1/4}{3/4}{1}{2}{MyGreen}
        
    
        \drawB{1/4}{3/4}{1}{MyRed}{{\(B_2\)}}
        \drawmean{1/4}{3/4}{1/4}{3/4}{1}{MyRed}
        
        \drawB{0}{\bb}{0}{MyBlue}{{\(B_1\)}}
        \drawmean{0}{\bb}{0}{\bb}{0}{MyBlue}

        \node[right] at (1.1,0.12) {$p=0.6$, without commitment};
    
    \end{tikzpicture}

    \def\bb{0.1909830056} 

\begin{tikzpicture}[scale = 8]
        \draw[|-,thick] (0,0) node[below] {$0$} -- (0.5,0) node[midway, below] {buy nothing} -- (0.5,0);
        \draw[|-,thick] (0.5,0) node[below] {$\frac12$} -- (0.75,0) node[below] {$\frac34$} node[midway,below,align=center] {product \\ only};
        \draw[|-|,thick] (0.75,0) -- (1,0) node[below] {$1$} node[midway,below,align=center] {with\\ add-on};

        \drawB{1/8}{11/64}{2}{MyGreen}{{\(B_3\)}}
        \drawB{53/64}{1}{2}{MyGreen}{{\(B_3\)}}
        \drawmean{1/8}{11/64}{53/64}{1}{2}{MyGreen}
        
    
        \drawB{11/64}{53/64}{1}{MyRed}{{\(B_2\)}}
        \drawmean{11/64}{53/64}{11/64}{53/64}{1}{MyRed}
        
        \drawB{0}{1/8}{0}{MyBlue}{{\(B_1\)}}
        \drawmean{0}{1/8}{0}{1/8}{0}{MyBlue}

        \node[right] at (1.1,0.12) {$p=0.6$, with commitment};
    
    \end{tikzpicture}

    \caption{Commitment-optimal and sender-preferred equilibrium partitions in the example. The dashed lines are the (conditional) means of the partitional elements.}
    \label{fig:intro}
\end{figure}

Suppose next that $\midstep = 0.6$.
The unique commitment-optimal partition can be shown to be given by a bi-pooling partition $B_1 = [0,\frac18], B_2 = [\frac{11}{64},\frac{53}{64}], B_3 = [\frac{1}{8},\frac{11}{64}]\cup[\frac{53}{64},1]$.
Compared to the previous case, here the profit from selling the standalone product expands and the set $B_2$ of types selling the standalone product shrinks.
In particular, there are now some types in $B_2$ that are above the bundle threshold of $\frac34$ and for whom revealing their type would be a profitable deviation.
Therefore, the above partition is not revelation-proof and the maximal equilibrium seller profit in the disclosure game is strictly below the commitment benchmark.
In this case, the seller-optimal equilibrium turns out to be associated with another bi-pooling partition given by $B_1 = [0,\frac{3-\sqrt{5}}{4}], B_2 = [\frac14,\frac34], B_3 = [\frac{3-\sqrt{5}}{4},\frac14]\cup [\frac34,1]$.
To see why the seller cannot do better in equilibrium, note that this partition has two constraints.
First, it is barely obedient for the same reason as the commitment optima described above.
Second, in contrast to the commitment solution, the upper bound of $B_2$ coincides with the threshold $\frac34$, indicating that the seller revelation-proofness constraint is binding.

Our model generalizes this example to any absolutely continuous prior distribution of the state, an ordered finite action space, and monotone preferences of the players.\footnote{The players are said to have monotone preferences if the sender's payoff is strictly increasing in the receiver's action and the receiver's preferred action is increasing in the expected state.}
Our first main result, \cref{thm:eq-payoff-set}, characterizes the sender's equilibrium payoff set by showing that every equilibrium payoff can be obtained in a {\it laminar partitional equilibrium}, that is, one associated with a laminar partition. The laminar property of a partition, introduced in \cite{candogan2023optimal}, generalizes the aforementioned bi-pooling property as follows.
In a bi-pooling partition, each element's convex hull may nest at most one other lower-indexed element, while in a laminar partition, it may nest any number of lower-indexed elements. 
The key step in the characterization is showing that laminar partitions are the most revelation-proof among all obedient partitions in the sense that any equilibrium partition can be transformed into a laminar equilibrium partition with the same sender payoff.

Our next result, \cref{thm:sender-pref-eq-characterization}, characterizes sender-preferred laminar equilibrium partitions and, more generally, provides a sharp equilibrium test for barely obedient laminar partitions. First, sender-preferred laminar equilibrium partitions are barely obedient in the sense that they leave no slack in the receiver's obedience constraints. \cref{thm:sender-pref-eq-characterization} also bounds their complexity by showing that action $i$ is recommended on at most $\max\{i-1,1\}$ intervals of states. Finally, for barely obedient laminar partitions, revelation proofness is simplified into a local condition: it suffices to check only a small subset of the non-lowest on-path actions and only at the highest state recommend each such action. In all other cases, revelation proofness is automatic.

Our remaining set of results discusses the attainability of the commitment payoff in the disclosure game. 
It is well known that generically every commitment problem admits a unique bi-pooling solution.
We establish in \cref{thm:id-sol} that, among all partitions yielding this solution, there is a unique barely obedient bi-pooling one.
Moreover, this partition is most resistant to violations of revelation proofness; consequently, the revelation proofness of this partition determines whether the commitment payoff can be achieved in a disclosure game (\cref{prop:bipooling-implementable-iff}). 
With binary actions, \cite{titova2025persuasion} show that every bi-pooling commitment solution is implementable. Our \cref{prop:sufficient} shows that for three or more actions, every bi-pooling commitment solution is implementable if the sender's utility is sufficiently convex in the cutoffs for the receiver's actions. In the above example with three actions, the sender's utility is sufficiently convex when $p$ is sufficiently low.

Our results suggest that full revelation may not be the only relevant outcome, even in settings where the sender can prove any true fact.
In particular, when the receiver has only finitely many actions, the emergence of a rich set of simple equilibria can help explain the variety of disclosure policies observed in practice.
Even within the class of laminar partitional equilibria, the same environment can sustain a fully revealing equilibrium, a sender-preferred equilibrium with substantially more withholding of information, and every payoff in between.
Our analysis of sender-preferred equilibria also informs the case of mandatory disclosure: in some environments, voluntary disclosure may lead to pooling of a large fraction of low states with high states and therefore a much lower receiver's payoff.

We apply these insights to study selling with quality disclosure and influencing voters. In the former setting, \cite{milgrom1981good} shows that when the buyer can purchase any fraction of the product, unraveling takes place and every equilibrium features full revelation. We show that if the buyer is restricted to purchasing integer units, the seller may be able to achieve the commitment payoff by withholding information. In the second application, we consider an expert who discloses verifiable information to a voter who chooses from three alternatives: the amended bill, the unamended bill, and the status quo. We demonstrate that the expert can be hurt even if, all else equal, the voter becomes more inclined toward the expert's most preferred alternative.

\paragraph{Related literature.}
This paper belongs to a growing body of literature that characterizes the equilibrium payoff set in disclosure games.\footnote{Disclosure games were introduced in \cite{gh80}, \cite{grossman1981informational}, and \cite{milgrom1981good}. For surveys of this literature, see \cite{milgrom2008persuasion}, \cite{dj10}, and \cite{benporath2025evidence}.} The most closely related paper is \cite{titova2025persuasion}, which studies a more general disclosure game with a finite number of receiver actions. We specialize their model by assuming the state space is the unit interval and the receiver has monotonic preferences that depend only on the expected state. These assumptions allow us to characterize sender-preferred equilibria, identify the limits of verifiable communication, and establish sufficient conditions on model primitives under which the sender achieves the commitment payoff. While we focus on environments in which the receiver's action set is more limited than it is in \citet{grossman1981informational} and \citet{milgrom1981good}---our receiver chooses from a finite set of actions---\citet{ali2024from} study settings with greater flexibility, where full revelation prompts an action that makes the sender no better off than inducing any other beliefs. Consequently, revelation proofness is not a concern. They provide conditions under which the set of equilibrium payoff profiles is virtually the same as the set of achievable payoff profiles under commitment. \cite{GieczewskiTitova} consider disclosure games with a general message mapping, propose an equilibrium selection criterion related to neologism proofness, and characterize the sender's ex-ante payoffs under this criterion when the sender has access to sufficiently rich stochastic evidence.\footnote{``Stochastic evidence'' means that the sender's available messages depend on both the state and chance. In our game, however, the set of available messages is determined solely by the state.}

Beyond expanding the work on disclosure games, our work contributes to the growing body of literature on the possibility of attaining the commitment payoff without full commitment in other communication environments, as in the cases of cheap talk \citep{lipnowski2020cheap,lipnowski2020equivalence}, repeated cheap talk \citep{best2024persuasion,kuvalekar2022goodwill,mathevet2022reputation,pei2023repeated}, informed information design \citep{PerezRichet2014, koessler2023informed,Zapechelnyuk2023},\footnote{One can also interpret these models as disclosure games in which the sender has access to stochastic evidence.} the ability to covertly revise a message generated by an experiment \citep{min2021bayesian,lrs}, the ability to covertly revise an experiment without affecting the marginal distribution over messages \citep{lin2024credible}, costly misreporting \citep{guo2021costly,nguyen2021bayesian}, disclosure following private experimentation \citep{arieli2025bayesian,dai2026bayesian}, and Bayesian persuasion under sender-worst equilibrium selection \citep{lipnowski2025perfect}.
In contrast to these studies, we study a one-shot communication game with verifiable information without any commitment.

In the literature, pooling of nonadjacent states has been obtained in settings with either sender commitment \citep[e.g.,][]{kms,absy}, the receiver's private information \citep[e.g.,][]{feltovich2002toocool,harbaugh2020false}\footnote{\cite{bederson2018incomplete} also provide empirical evidence.}, or both \citep[e.g.,][]{guo2019interval,candogan2023optimal}.
While our setting has neither sender commitment nor receiver private information, equilibria in our game feature a similar structure.
The most closely related paper to ours is \cite{candogan2023optimal}, which studies linear persuasion with one or more privately informed receivers. In that paper, laminar partitions arise as solutions to standard linear maximization problems with a mean-preserving contraction constraint, in which the receivers' incentive constraints are written as additional moment conditions.
Despite the fact that our revelation-proofness constraint cannot be written as a moment condition, the laminar property likewise plays a key role in our disclosure game because, as we show, laminar partitions turn out to be the most revelation-proof.

\section{The Model}

We consider the following {\it disclosure game} between the sender (he) and the receiver (she).\footnote{We model verifiable disclosure in the same way as it is modeled in the seminal papers of \cite{gh80}, \citet{grossman1981informational}, and \citet{milgrom1981good}.} The state space is $\Omega = [0,1]$, and the common prior belief $\mu_F \in \Delta\Omega$ is induced by a CDF $F$ that admits a strictly positive density $f$.
First, the sender learns the state. Then the sender communicates with the receiver using verifiable messages. Specifically, the sender's message space in state $\omega \in \Omega$ is $M(\omega) := \{m \in \mathcal{C} : \omega \in m\}$, where $\mathcal{C}$ is the collection of all nonempty closed subsets of $[0,1]$. Finally, the receiver observes the message, forms a posterior belief, and takes an action. 

The receiver's action space is $N = \{ 1,\ldots, n \}$, where $n > 1$. The receiver's optimal action depends only on her posterior mean, denoted by $x \in [0,1]$.
We assume that the receiver's preferences are monotone in the sense that action $i$ is optimal if and only if $x \in [\gamma_{i},\gamma_{i+1}] =: A_i$ for some cutoffs $0 = \gamma_{1} < \gamma_{2} < \cdots < \gamma_{n+1} = 1$.\footnote{One interpretation is that the receiver wants to match the state but is constrained by the number of available actions.} 
We also assume that the sender's state-independent payoff $u_i$ from the receiver taking action $i$ is strictly increasing in $i$.\footnote{We use ``increasing,'' ``smaller,'' and ``greater'' in the weak sense; ``strictly'' will be added whenever needed.}
Without loss, we normalize $u_1 = 0$.
Finally, let 
\[
v(x) =
\begin{cases}
u_i & \text{if } x \in [\gamma_{i},\, \gamma_{i+1})\ \text{for some}\ i \in N \smallsetminus \{ n \},\\
u_n & \text{if } x \in [\gamma_{n},\, \gamma_{n+1}]
\end{cases}
\]
denote the sender's \ul{value function}, which maps the receiver's posterior mean to the highest attainable sender payoff. By construction, $v(x)$ is upper semicontinuous.

We focus on perfect Bayesian equilibria of the disclosure game. An \ul{assessment} is a triple $(\sigma,\tau,\beliefm)$, where $\sigma\colon \Omega \to \Delta \mathcal{C}$ is the sender's strategy, $\tau\colon \mathcal{C} \to \Delta N$ is the receiver's strategy, and $\beliefm\colon \mathcal{C} \to \Delta \Omega$ is the receiver's belief system.\footnote{For a compact metric space $Y$, let $\Delta Y$ denote the set of all probability measures on the Borel subsets of $Y$. Endowed with the Hausdorff distance, $\mathcal{C}$ is a compact metric space.} An assessment $(\sigma,\tau,\beliefm)$ is a (perfect Bayesian) \ul{equilibrium} if
    \begin{enumerate}
        \item for every $\omega \in \Omega$, $\sigma(\omega)$ is supported on $\arg\max_{m \in M(\omega)} \sum_{i\in N} \tau(i \sep m) u_{i}$; \label{eq-cond-1}
        \item for every $m \in \mathcal{C}$, $\tau(i \sep m) > 0$ implies $\int_\Omega \omega\ddd \beliefm(\omega \sep m) \in A_i$; \label{eq-cond-2}
        \item 
        $\beliefm$ is obtained from $F$ given $\sigma$ using Bayes' rule; \label{eq-cond-3} 
        \item for every $m \in \mathcal{C}$, $\supp(\beliefm(m)) \subseteq m$. \label{eq-cond-4}
    \end{enumerate}
    
In words, the first condition requires that the sender choose verifiable messages that maximize his expected payoff. The second condition requires that the receiver chooses an action that is optimal given her posterior mean. The third condition requires that the receiver uses Bayes' rule to calculate posterior beliefs from the prior and the sender's strategy. The final condition requires that the receiver's belief system is consistent with disclosure: she deems impossible any state in which the observed (on- or off-path) message is unavailable.

\section{Equilibrium Analysis}\label{section:eq-analysis}

We begin the analysis by introducing the notion of a partition of the state space and its key properties. A sequence $\mathcal{B} := \{B_i\}_{i \in N} \subseteq \mathcal{C}$ of closed subsets of $[0,1]$ is an (ordered) \ul{partition} if $\bigcup_{i \in N} B_i = [0,1]$ and $\mu_F\left(B_i \cap B_j\right) = 0$ for all $i, j \in N$.
Note that we index partitions using the action set for notational convenience; we also allow partitional elements to have nonempty but null intersections.

We say that an assessment
$(\sigma,\tau,\beliefm)$ and a partition $\mathcal{B}$ are \ul{associated} if for each $i \in N$,
\begin{itemize}
    \item[(a)] $\sigma(B_i \sep \omega) = \mathbbm{1}(\omega \in B_i \text{ and } \omega \notin B_{i+1} \cups B_n)$;
    \item[(b)] $\tau(i \sep B_i) = 1$;
    \item[(c)] $\beliefm(\cdot \sep B_i) = \mu_F(\cdot \sep B_i)$.
\end{itemize}
In an assessment associated with a partition $\mathcal{B}$, the sender's strategy is to reveal which element of the partition the state belongs to by sending message $B_i$ when $\omega \in B_i$.\footnote{Whenever $\omega$ belongs to multiple partition elements, the sender sends the message corresponding to the one with the highest index.}
Such a message can be interpreted as a recommendation for the receiver to choose action $i$.
Then the receiver's strategy is to always follow the recommendation.
Finally, the receiver's posterior beliefs on the path are computed using Bayes' rule.
Note that a partition uniquely defines the players' on-path behavior: if a partition is associated with multiple assessments, all of those assessments differ only in the receiver's off-path beliefs and actions. If an assessment $(\sigma,\tau,\beliefm)$ associated with a partition $\mathcal{B}$ is an equilibrium, then we call $(\sigma,\tau,\beliefm)$ a \ul{partitional equilibrium} and $\mathcal{B}$ an \ul{equilibrium partition}. The following two properties are necessary and sufficient for $\mathcal{B}$ to be an equilibrium partition.\footnote{This result was first shown in \cite{titova2025persuasion} for a disclosure game with a slightly different message space. We formulate and prove this result for our disclosure game in \cref{lemma:eqm-partition} in the appendix.}
\begin{definition}\label{dfn:RP-obedience}
    A partition $\mathcal{B}$ is
    \begin{itemize}
        \item \ul{obedient} if $\mathbb{E}[\omega \sep \omega \in B_i] \in A_i$ for each $i \in N$. 
        \item \ul{revelation-proof} if $\omega \in B_i$ implies $\omega \in A_1 \cups A_i =  [0,\gamma_{i+1}]$ for each $i \in N$.
    \end{itemize}
\end{definition}

In words, obedience requires that the receiver indeed prefers to take action $i$ when she learns that $\omega \in B_i$, that is, after message $B_i$. Revelation proofness ensures that fully revealing the state is not a profitable deviation for the sender. Indeed, in the disclosure game, the sender has the option to fully reveal the state by sending message $\{ \omega \}$ with probability one, thus convincing the receiver to take the action that she would take knowing $\omega$. Thus, if $\omega \in B_i$---in other words, the partition prescribes that the receiver takes action $i$ when the realized state is $\omega$---then the receiver prefers to take {\it at most} action $i$ when fully informed. 

Next we define a key structural property of partitions. Let $\clos(\cdot)$ and $\conv(\cdot)$ denote the closure and the convex hull operators, respectively.

\begin{definition}
    A partition $ \mathcal{B} $ is \ul{laminar} if for each $i \in N$,
    \begin{equation} \label{eqn:laminar-definition}
    B_i = \mathrm{cl}\left(\conv{(B_i)} \, \Big\backslash \, \bigcup_{j < i} \conv{(B_j)}\right).
    \end{equation}    
    If an equilibrium is associated with a laminar partition $\mathcal{B}$, we call this equilibrium a \ul{laminar partitional equilibrium}.
\end{definition}

The intuitive interpretation of the laminar property is that each element $B_i$ either has no ``gaps'' and is therefore an interval or has ``gaps'' and these gaps belong to lower-indexed partitional elements $B_1,\ldots,B_{i-1}$ only.
This implies that for each pair $B_i$ and $B_j$ with $i>j$, the convex hull $\conv(B_i)$ either nests or has a null intersection with $\conv(B_j)$.\footnote{
    In combinatorics, each pair of elements in a laminar set family is either nested or disjoint.
    We borrow the term from \cite{candogan2023optimal}, in which the definition of a laminar partition is similar to ours but allows for any nesting partial order.
    Their definition is essentially equivalent to ours for finite partitions, in the sense that our partition comes with an exogenous linear order that must complete the nesting partial order.
}
When $\conv(B_i) \supseteq \conv(B_j)$, we say that action $i$ \emph{nests} action $j$.
We illustrate this intuition in \cref{fig:exlaminar}.

The following observation is a direct consequence of the definition of a laminar partition. 

\begin{observation} \label{obs:laminar-i-intervals}
    If $\mathcal{B}$ is a laminar partition, then each $B_i$ is either an empty set or a union of at most $i$ intervals.
\end{observation}

\begin{figure}[ht!]
\begin{subfigure}{0.45\textwidth}
    \centering
       \begin{tikzpicture}[scale = 6.5]

        \draw[|-|,thick] (0,0) node[below] {$0$} -- (1,0) node[below] {$1$};

        \drawB{0.2}{0.3}{3}{MyOrange}{{\(B_4\)}}
        \drawB{0.4}{0.6}{3}{MyOrange}{}
        \drawB{0.7}{1}{3}{MyOrange}{}

        \drawB{0.6}{0.7}{2}{MyGreen}{{\(B_3\)}}

        \drawB{0.3}{0.4}{1}{MyRed}{{\(B_2\)}}
        
        \drawB{0}{0.2}{0}{MyBlue}{{\(B_1\)}}
    
    \end{tikzpicture}
    \caption{Only $B_4$ has gaps, but they belong to the lower-indexed $B_2$ and $B_3$.}
    \label{fig:exislaminar}
\end{subfigure}
~
\begin{subfigure}{0.45\textwidth}
    \centering
        \begin{tikzpicture}[scale = 6.5]

        \draw[|-|,thick] (0,0) node[below] {$0$} -- (1,0) node[below] {$1$};

        \drawB{0.5}{0.6}{3}{MyOrange}{$B_4$}
        \drawB{0.7}{1}{3}{MyOrange}{}

        \drawB{0.4}{0.5}{2}{MyGreen}{$B_3$}
        \drawB{0.6}{0.7}{2}{MyGreen}{}

        \drawB{0.2}{0.4}{1}{MyRed}{$B_2$}
        
        \drawB{0}{0.2}{0}{MyBlue}{$B_1$}

    \end{tikzpicture}
    \caption{$B_3$ has a gap that belongs to the higher-indexed $B_4$.}
    \label{fig:exisnotlaminar}
\end{subfigure}
\caption{A laminar partition in panel (a) and a non-laminar partition in panel (b).}
\label{fig:exlaminar}
\end{figure}

One prominent example of an obedient and revelation-proof laminar partition is $\mathcal{A} := \{ A_i \}_{i \in N} = \{ [\gamma_{i},\gamma_{i+1}] \}_{i \in N}$, which we call the \ul{fully informative} partition. 
Note that while $\mathcal{A}$ is not fully revealing, it provides the minimal information necessary for the receiver to take her complete-information optimal action in each state.

\subsubsection*{Equilibrium Payoff Set}

Next we turn to the characterization of the equilibrium payoff set.
We say that an equilibrium is \ul{sender-preferred} if it yields the highest ex-ante payoff for the sender across all equilibria. Analogously, we refer to the ex-ante payoff-minimizing equilibrium as \ul{sender-worst}. 
Our first main result shows that every equilibrium ex-ante payoff of the sender is achievable in an equilibrium associated with an obedient and revelation-proof laminar partition.\footnote{In Supplementary Appendix \ref{refinement}, we show that all partitional equilibria in our model satisfy standard refinements, such as the never-a-weak-best-response (NWBR) criterion \citep{chokreps1987} and the Grossman–Perry–Farrell refinement \citep{bertomeu2018verifiable}.}

\begin{theorem}\label{thm:eq-payoff-set}
\begin{enumerate}[label=(\alph*)]
    \item There exists a sender-worst equilibrium that is associated with the fully informative partition $\mathcal{A}$ and yields  $\underline{V} \coloneqq \sum_{i \in N} \mu_F( A_i ) u_i$. \label{thm1-sw}
    \item There exists a sender-preferred equilibrium that is associated with an obedient and revelation-proof laminar partition and yields $\overline{V} > \underline{V}$. \label{thm1-sp}
    \item For any $V \in [\underline{V},\overline{V}]$, there exists an equilibrium associated with an obedient and revelation-proof laminar partition. \label{thm1-range}
\end{enumerate}
\end{theorem}

We provide the intuition for \cref{thm:eq-payoff-set} below. Part \ref{thm1-sw} is straightforward: as in most disclosure games, there exists an equilibrium in which the receiver acts as if she is fully informed. In our setting, that equilibrium is associated with the fully informative partition $\mathcal{A}$. The sender's ex-ante payoff cannot fall below $\underline{V}$, or else the sender will have a profitable deviation toward fully revealing some state.

Part (b) follows from two key observations.
First, note that it is without loss to focus on partitional equilibria.
This is because in a sender-preferred equilibrium, the receiver breaks ties in favor of the sender.
Hence, a single action is taken with probability one in each state.

Second, we show that for any equilibrium partition there exists an equilibrium {\it laminar} partition that induces the same posterior mean distribution (PMD) and hence yields the same ex-ante payoff to the sender. 
We first observe that any partition induces a PMD with support on at most $n$ points (posterior means); using the techniques from \cite{candogan2023optimal}, we show that any such PMD can be induced by a {\it laminar} partition. We further show that if a PMD is induced by an obedient and revelation-proof partition, then the laminar partition that induces the same PMD is guaranteed to be obedient and revelation-proof. 
In this sense, laminar partitions are the most revelation-proof partitions. 

\begin{figure}[ht!]
    \centering

        \begin{tikzpicture}[scale = 8]

  \draw[|-|,thick] (0,0) node[below] {$0$} -- (1,0) node[below] {$1$};

  \drawB{0}{0.2}{1}{MyRed}{{\(B_2\)}}
  \drawB{0.7}{1}{1}{MyRed}{{\(B_2\)}}
  \drawB{0.2}{0.7}{0}{MyBlue}{{\(B_1\)}}
  \draw (0.75,0.02) -- (0.75,-0.02) node[below] {$\gamma_2$};

\drawmean{0.2}{0.7}{0.2}{0.7}{0}{MyBlue}
\drawmean{0}{0.2}{0.7}{1}{1}{MyRed}











\end{tikzpicture}

        \begin{tikzpicture}[scale = 8]

  \draw[|-|,thick] (0,0) node[below] {$0$} -- (1,0) node[below] {$1$};

  \drawB{0.1}{0.35}{1}{MyRed}{{\(W_2\)}}
  \drawB{0.75}{1}{1}{MyRed}{{\(W_2\)}}
  \drawB{0}{0.1}{0}{MyBlue}{{\(W_1\)}}
  \drawB{0.35}{0.75}{0}{MyBlue}{{\(W_1\)}}
  \draw (0.75,0.02) -- (0.75,-0.02) node[below] {$\gamma_2$};

\drawmean{0}{0.1}{0.35}{0.75}{0}{MyBlue}
\drawmean{0.1}{0.35}{0.75}{1}{1}{MyRed}
  










\end{tikzpicture}

    \caption{Two partitions, $\mathcal{B}$ (laminar) and $\mathcal{W}$ (non-laminar), that induce the same PMD and whose corresponding elements have the same prior mass. If $\mathcal{W}$ is revelation-proof, then so is $\mathcal{B}$.}
    \label{fig:most-RP}
\end{figure}

We illustrate the intuition behind laminar partitions as the most revelation-proof ones in \cref{fig:most-RP}, which presents the simplest case, where $n=2$. Consider two partitions, $\mathcal{B}$ (laminar) and $\mathcal{W}$ (non-laminar), that induce the same PMD and whose corresponding elements have the same prior mass; that is, $\mathbbm{E}[\omega \sep \omega \in B_i] = \mathbbm{E}[\omega \sep \omega \in W_i]$ and $\mu_F(B_i) = \mu_F(W_i)$ for all $i \in N$. Since $\mathcal{B}$ is laminar, its lowest-indexed element $B_1$ is an interval. If $W_1$ is not an interval, the only way to match $B_i$'s prior mass and expectation is to have $\max B_1 \leq \max W_1$. Consequently, if $\mathcal{W}$ is revelation-proof (which, in the case of two actions, reduces to $\max W_1 \leq \gamma_2$), so is $\mathcal{B}$.
 
Combining the two aforementioned observations, to solve for the sender-preferred equilibrium payoff, one can maximize the sender's expected payoff across all obedient and revelation-proof laminar partitions as follows:
\begin{equation}\label{problem:sender-preferred-laminar}
    \begin{aligned}
    \overline{V}  = &\max_{ \mathcal{B} }   \sum_{i \in N} \mu_F(B_i) u_i \\
    &\text{subject to}\colon \mathcal{B} \text{ is an obedient and revelation-proof laminar partition.}
\end{aligned}
\end{equation}
We show that problem \eqref{problem:sender-preferred-laminar} admits a solution,\footnote{The existence of a solution follows from the extreme value theorem. 
To apply it, we endow the space of laminar partitions with a topology that makes the sender's payoff continuous and the constraint set compact. See \cref{lemma:payoff-max-laminar} and its proof in the appendix for details.} which implies that a sender-preferred equilibrium exists. To show that $\overline{V} > \underline{V}$, we construct an obedient and revelation-proof (and hence equilibrium) laminar partition that yields a strictly higher payoff than $\underline{V}$. For example, take the fully revealing partition $\{ [\gamma_i,\gamma_{i+1}] \}_{i\in N}$ and ``move'' the interval $[0,\varepsilon]$ from the first element of the partition to the $n$-th one. If $\varepsilon >0$ is sufficiently small, the resulting partition will remain obedient, revelation-proof, and laminar, but the sender's ex-ante utility in the associated equilibrium will be strictly higher than $\underline{V}$. Therefore, the sender's ex-ante payoff in his most preferred equilibrium will also exceed $\underline{V}$.

To prove Part \ref{thm1-range}, we show that for any payoff between $\underline{V}$ and $\overline{V}$ there exists a laminar equilibrium partition that blends the sender-worst (fully informative and laminar) partition $\mathcal{A}$ and the sender-preferred laminar equilibrium partition $\mathcal{B}$, and this blend yields the payoff.

\subsubsection*{Sender-Preferred Equilibrium}

Next we characterize the sender-preferred laminar partitional equilibrium. We say that action $i$ is \ul{skipped} in partition $\mathcal{B}$ if $B_i$ is a null set and \ul{unskipped} otherwise. We refer to partition $\mathcal{B}$ as \ul{barely obedient} if $\mathbbm{E}[\omega \sep \omega \in B_i] = \gamma_i$ for all but the lowest unskipped $i \in N$.

\begin{theorem} \label{thm:sender-pref-eq-characterization}
    Every sender-preferred laminar equilibrium partition $\mathcal{B}$ is barely obedient, and $B_i$ is the union of at most $\max\{i-1,1\}$ closed intervals for all $i \in N$. 
    Furthermore, a barely obedient laminar partition is an equilibrium partition if and only if $\max B_i \leq \gamma_{i+1}$ for any unskipped action $i < n$ such that action $i+1$ either nests $i$ or is skipped.
\end{theorem}

The first part of \cref{thm:sender-pref-eq-characterization} characterizes problem \eqref{problem:sender-preferred-laminar}'s solution---that is, an obedient and revelation-proof laminar partition that maximizes the sender's ex-ante payoff. First, $\mathcal{B}$ must be barely obedient; otherwise, one could pool additional low states with high states without violating obedience or revelation proofness, thereby obtaining an equilibrium with a strictly higher ex-ante sender payoff. Specifically, if $j$ is the lowest unskipped action, one can reassign a subset of $B_j$ of strictly positive measure to $B_k$ for some $k > j$. To show that each partitional element is the union of at most $\max\{i-1,1\}$ closed intervals, we first note that $B_j = [0,d]$ for some $d \le \gamma_{j+1}$ for the lowest unskipped action $j$. That is, the lowest unskipped action is an interval that is not in the convex hull of any other partitional elements. Then each partitional element with index $k \geq 2$ must be a union of at most $k-1$ intervals by the definition of a laminar partition.

Next, note that revelation proofness is equivalent to $\max B_i \le \gamma_{i+1}$ for each action $i$. 
The second part of \cref{thm:sender-pref-eq-characterization} shows that for a barely obedient laminar partition, this constraint is only binding for an action in two specific scenarios, illustrated in \cref{fig:RP-violations}: either the next-highest action is skipped or the action's corresponding partitional element is nested within the convex hull of the partitional element for the next-highest action. In all other instances, revelation proofness is automatically satisfied. This finding further emphasizes that the laminar structure makes revelation proofness easier to fulfill.

\begin{figure}[ht!]
    \centering
    \begin{subfigure}[b]{0.495\textwidth}
        \centering
        \begin{tikzpicture}[scale = 6.5]

  \draw[|-|,thick] (0,0) node[below] {$0$} -- (1,0) node[below] {$1$};

  \drawB{0.6}{1}{2}{MyGreen}{{\(B_3\)}}
  \drawemptyB{1}{MyRed}{{\(B_2\)}}
  \drawB{0}{0.6}{0}{MyBlue}{{\(B_1\)}}
  \draw (0.4,0.02) -- (0.4,-0.02) node[below] {$\gamma_2$};
  \draw (0.8,0.02) -- (0.8,-0.02) node[below] {$\gamma_3$};










\end{tikzpicture}
        \caption{If action $2$ is skipped and $\gamma_2 < \max B_1$, then full revelation induces the higher skipped action $2$ for all $\omega \in [\gamma_1,\max B_1) \subseteq B_1$.}
        \label{fig:nonRP-left}
    \end{subfigure}%
    ~
    \begin{subfigure}[b]{0.495\textwidth}
        \centering
        \begin{tikzpicture}[scale = 6.5]

  \draw[|-|,thick] (0,0) node[below] {$0$} -- (1,0) node[below] {$1$};

  \drawB{0}{0.2}{2}{MyGreen}{{\(B_3\)}}
  \drawB{0.7}{1}{2}{MyGreen}{{\(B_3\)}}
  \drawB{0.2}{0.7}{1}{MyRed}{{\(B_2\)}}
  \drawemptyB{0}{MyBlue}{{$B_1$}}
  \draw (0.5,0.02) -- (0.5,-0.02) node[below] {$\gamma_3$};

  








\end{tikzpicture}
        \caption{If $\gamma_3 < \max B_2$ and $\conv(B_2) \subseteq \conv(B_3)$, then full revelation induces the higher action $3$ for all $\omega \in [\gamma_2,\max B_2) \subseteq B_2$.}
        \label{fig:nonRP-right}
    \end{subfigure}
    \caption{Violations of revelation proofness ruled out in \cref{thm:sender-pref-eq-characterization}.}
    \label{fig:RP-violations}
\end{figure}

As in \cite{candogan2023optimal}, the laminar structure emerges in a sender-preferred equilibrium of our game due to incentive constraints. In \cite{candogan2023optimal}, where the receiver is privately informed, the laminar structure prevents the receiver from misreporting her private information. In our setting, however, it optimally balances the trade-off between deterring the sender's deviations and inducing the most desirable action distribution. 

As discussed after \cref{thm:eq-payoff-set}, laminar partitions are the most revelation-proof among all obedient partitions. However, interval partitions---a special case of laminar partitions---often fail to be sender-preferred equilibrium partitions. This underscores the importance of pooling nonadjacent states in inducing the most desirable distribution over actions subject to obedience and revelation proofness. To illustrate, recall our introductory example with $p=0.6$. The interval partition that generates the highest sender's ex-ante payoff is given by $B_1^I = [0,\frac14], B_2^I = [\frac14,\frac34], B_3^I = [\frac34,1]$. Since $\mathbbm{E}[\omega \sep \omega \in B_3^I] > \gamma_3 = \frac{3}{4}$, we can pool the states from the top of $B_1^I$ with $B_3^I$ until the partition becomes barely obedient. This process yields a sender-preferred equilibrium partition, $B_1 = [0,\frac{3-\sqrt{5}}{4}], B_2 = [\frac14,\frac34], B_3 = [\frac{3-\sqrt{5}}{4},\frac14]\cup [\frac34,1]$.

\section{When Is Commitment Payoff Achievable?} \label{s:imp}
While the extent to which the sender benefits from verifiable communication depends on the specific parameters, an upper bound on the sender's payoff is given by his commitment payoff---that is, his payoff when he can \emph{commit} to what messages to send in each state. In this section, we identify conditions under which the sender can attain his commitment payoff in an equilibrium of the disclosure game. 

\subsection{Commitment Benchmark} \label{ss:idb}
We start by introducing the commitment problem, or information-design problem, as a benchmark. In this problem, the sender can commit to any experiment that reveals information about the state. An \ul{experiment} is a mapping $\chi\colon \Omega \to \Delta(S)$, where $S$ is a sufficiently rich signal space.
For each state $\omega \in \Omega$, a signal $s \in S$ realizes according to $\chi(\omega)$. Because the receiver's optimal action only depends on the expected state, it is without loss to restrict attention to the class of experiments in which $S = [0,1]$ and each $s \in S$ is calibrated to equal the induced posterior mean: $s = \mathbb{E}[\omega \sep s]$.
Such a calibrated experiment $\chi$ induces the PMD with CDF $G(x) = \int_0^1 \int_0^x \dd \chi(s \sep \omega) \dd F(\omega)$.

It is well known that a PMD $G$ is induced by some experiment if and only if $G$ is a mean-preserving contraction of the prior CDF $F$.\footnote{See, for example, \citet{gk16} and \citet{kolotilin2018optimal}. A distribution $G \in \Delta([0,1])$ is a mean-preserving contraction of $F$ if $\int_{0}^{x} G(s) \mathrm{d} s \leq \int_{0}^{x} F(s) \mathrm{d} s$ for all $x \in [0,1]$, where the inequality binds at $x=1$.\label{fnc}} Consequently, the commitment problem can be stated as a maximization of the expected value $v$ with respect to the PMD,
\begin{equation} \label{id_problem}
	\max _{G \in \text{MPC}(F)} \int_{0}^{1} v(x) \ddd G(x),
\end{equation}
where $\text{MPC}(F)$ is the set of all mean-preserving contractions of $F$.
We call any solution to problem \eqref{id_problem} a \ul{commitment solution} and call the value of problem \eqref{id_problem} the \ul{commitment payoff}.
Clearly, the commitment payoff is an upper bound of the sender's equilibrium payoff in the disclosure game.

Finally, we say that a commitment solution $G$ is implementable with verifiable messages, or simply \ul{implementable} for short, if there is an equilibrium in which the sender's strategy induces the receiver's PMD $G$.

We say that an experiment $\chi$ is associated with a partition $\mathcal{B}$ if it maps almost every $\omega \in B_i$ into the degenerate distribution centered on $\mathbb{E}[\omega \sep \omega\in B_i]$.
In other words, such an experiment discloses only which element of the partition the state belongs to.
In this case, we also say that the induced PMD $G$ is associated with this partition.

Next we define a refinement of the laminar property, which is key for optimality under commitment.

\begin{definition}
    A laminar partition is a \ul{bi-pooling partition} if for every $i\in N$, either $B_i$ is an interval or there exists a unique $j < i$ and $B_j \subseteq \conv(B_i)$.
\end{definition}

\begin{figure}[!htb]
\begin{subfigure}{0.45\textwidth}
    \centering
        \begin{tikzpicture}[scale = 6.5]

        \draw[|-|,thick] (0,0) node[below] {$0$} -- (1,0) node[below] {$1$};

        \drawB{0.4}{0.5}{3}{MyOrange}{$B_4$}
        \drawB{0.7}{1}{3}{MyOrange}{}

        \drawB{0.5}{0.7}{2}{MyGreen}{$B_3$}

        \drawB{0.2}{0.4}{1}{MyRed}{$B_2$}
        
        \drawB{0}{0.2}{0}{MyBlue}{$B_1$}

    \end{tikzpicture}
    \caption{$B_1$, $B_2$, and $B_3$ are intervals, and $\conv{(B_4)}$ only contains $B_3$.}
    \label{fig:exisbp}
\end{subfigure}
~
\begin{subfigure}{0.45\textwidth}
    \centering
       \begin{tikzpicture}[scale = 6.5]

        \draw[|-|,thick] (0,0) node[below] {$0$} -- (1,0) node[below] {$1$};

        \drawB{0.2}{0.3}{3}{MyOrange}{{\(B_4\)}}
        \drawB{0.4}{0.6}{3}{MyOrange}{}
        \drawB{0.7}{1}{3}{MyOrange}{}

        \drawB{0.6}{0.7}{2}{MyGreen}{{\(B_3\)}}

        \drawB{0.3}{0.4}{1}{MyRed}{{\(B_2\)}}
        
        \drawB{0}{0.2}{0}{MyBlue}{{\(B_1\)}}
    
    \end{tikzpicture}
    \caption{$B_4$ is not an interval, and $B_2, B_3 \subseteq \conv{(B_4)}$.}
    \label{fig:exisnotbp}
\end{subfigure}
\caption{A bi-pooling partition in panel (a) and a laminar partition that is not a bi-pooling partition in panel (b).}
\label{fig:exbipool}
\end{figure}

While laminar partitions allow any $B_j$ with $j < i$ to be nested within $\conv(B_i)$, bi-pooling partitions allow only for a single such $B_j$ to be nested within $\conv(B_i)$ (see \cref{fig:exbipool}).

The following result regarding bi-pooling partitions is a direct consequence of the results in \cite{kms}, \cite{candogan2022persuasion}, and \cite{absy}. Say that the communication environment is \ul{generic} if no three elements of the collection of points $\{(\gamma_i, u_i)\}_{i \in N}$ are collinear.

\begin{theorem}\label{thm:id-sol}~
    There exists a commitment solution associated with a unique barely obedient bi-pooling partition.
    Moreover, the commitment solution is unique in generic communication environments.
\end{theorem}

\cref{thm:id-sol} indicates that, despite the simplicity of bi-pooling partitions, there always exists a commitment solution associated with a bi-pooling partition. Moreover, under mild conditions, the \emph{unique} commitment solution is associated with a bi-pooling partition.

\subsection{Characterizing Implementability} \label{subsection:imp}
The following result characterizes the implementability of a commitment solution associated with a bi-pooling partition.

\begin{proposition} \label{prop:bipooling-implementable-iff} 
    Let $G$ be a commitment solution associated with a bi-pooling partition $\mathcal{B}$.
    Then $G$ is implementable if and only if $\mathcal{B}$ is revelation-proof.
\end{proposition}

It follows directly from \autoref{thm:id-sol} and \autoref{prop:bipooling-implementable-iff} that in a generic communication environment, the unique commitment solution is implementable if and only if the associated bi-pooling partition is revelation-proof.

The ``if'' part of \cref{prop:bipooling-implementable-iff} is a direct consequence of Theorem 2 in \cite{titova2025persuasion}, which states that for a partition associated with a commitment solution, implementability is equivalent to revelation proofness. Our primary contribution in \cref{prop:bipooling-implementable-iff} is to show that of all sender strategies that induce a commitment solution, the one associated with a bi-pooling partition has the best shot at being an equilibrium strategy. Roughly, this stems from the fact that bi-pooling partitions are a special case of laminar partitions, which exhibit a similar property.

\autoref{prop:bipooling-implementable-iff} is useful in that it suggests a ``guess and verify'' approach to finding the sender-preferred equilibrium. First, one finds the commitment solution using standard information-design methods. Second, one identifies the associated bi-pooling partition. If this partition proves to be revelation-proof, then the sender-preferred equilibrium has been successfully identified.

The next result, which is a corollary of \cref{prop:bipooling-implementable-iff} and \cref{thm:sender-pref-eq-characterization}, goes one step further: it reveals the exact features of bi-pooling partitions that fail revelation proofness and hence prevent the commitment solutions from being implementable. 

\begin{corollary} \label{cor:bip_imp_2}
    Let $G$ be a commitment solution associated with a bi-pooling partition $\mathcal{B}$.
    Then $G$ is implementable if and only if $\mathcal{B}$ is such that $\max B_i \leq \gamma_{i+1}$ for any unskipped action $i < n$ such that action $i+1$ either nests $i$ or is skipped.
\end{corollary}

Compared to the definition of revelation proofness, the condition in \cref{cor:bip_imp_2} is easier to verify when determining whether a commitment solution is implementable.
It also allows us to identify sufficient conditions under which commitment has no value in the subsequent \cref{subsection:sufficient-cond}. 

To illustrate this result, recall our introductory example with $A = \{1,2,3\}$, $u_1 = 0$, $u_2 = p$, $u_3 = 1$, $\gamma_2 = \frac{1}{2}$, $\gamma_3 = \frac{3}{4}$, and uniform $F$.
When $p = 0$, there is a unique commitment solution associated with a bi-pooling partition given by $B_1 = [0,\frac12], B_2 = \varnothing, B_3 = [\frac12,1]$.
Note that the condition in \cref{cor:bip_imp_2} only applies to action $1$ and is satisfied because $\max B_1 = \frac12 \leq \frac12 = \gamma_2$.
In other words, the commitment-optimal partition is revelation-proof because the sender cannot induce an action higher than $1$ by fully revealing the state when it is in $B_1$.

When $p = 0.5$, there is a unique commitment solution associated with a bi-pooling partition given by $B_1 = [0,\frac14], B_2 = [\frac{5}{16},\frac{11}{16}], B_3 = [\frac14,\frac{5}{16}]\cup [\frac{11}{16},1]$.
Note that the condition in \cref{cor:bip_imp_2} only applies to action $2$ and is satisfied because $\max B_2 = \frac{11}{16} \leq \frac34 = \gamma_3$.

\cref{prop:bipooling-implementable-iff} and \cref{cor:bip_imp_2} might suggest that implementing a commitment solution associated with a bi-pooling partition is relatively simple: one needs only to ensure that no action is recommended more often than revelation proofness allows. Indeed, \cite{titova2025persuasion} showed that when a receiver is choosing between two actions, revelation proofness is automatically satisfied. When there are three or more actions, however, the restriction imposed by revelation proofness can be substantial.
We illustrated this in our introductory example with $p = 0.6$.
In this case, the commitment-optimal bi-pooling partition is given by $B_1 = [0,\frac18], B_2 = [\frac{11}{64},\frac{53}{64}], B_3 = [\frac18,\frac{11}{64}]\cup [\frac{53}{64},1]$.
Note that in this case the condition in \cref{cor:bip_imp_2} is violated for action $2$, which is nested by action $3$, but $\max B_2 = \frac{53}{64} > \frac34 = \gamma_3$.

When there are two actions, the sender's sole objective is to maximize the probability that the ``high action'' $2$ is played. 
This in turn suggests that revelation proofness is never an issue: in any state in which action $2$ is played under complete information, there is no reason to recommend action $1$. However, when there are three actions, as \cite{gk16} note, the sender in the commitment problem faces a trade-off between inducing actions $2$ and $3$. When $\midstep$ is high, the gap between $u_2$ and $u_3$ is significantly smaller than that between $u_1$ and $u_2$, and hence it is more profitable to induce action $2$ more often: to guarantee obedience, recommending action $3$ more often must come with action $1$ being played more frequently. Consequently, the unique commitment-optimal partition recommends action $2$ so frequently that $\max B_2 > \gamma_3$.
In states strictly higher than $\gamma_3$, the sender is strictly better off by fully revealing the state, rendering the commitment solution not implementable.

\subsection{Sufficient Conditions for Implementability} \label{subsection:sufficient-cond}
In what follows, we identify conditions on model primitives that guarantee the existence of an implementable commitment solution. Under these conditions, the sender does not benefit from commitment relative to the sender-preferred equilibrium. The equilibrium payoff set and the sender-preferred equilibrium partition can thus be found by solving the corresponding commitment problem. Conversely, the commitment assumption is unnecessary for any information-design problem that satisfies these conditions.

To state the result, let $h(\gamma_i;\gamma_{i+1})$ denote the unique solution of $\mathbb{E}[\omega \mid \omega \in [h(\gamma_i;\gamma_{i+1}), \allowbreak \gamma_{i+1}]] = \gamma_i$ if it exists, and set it to $0$ otherwise. In words, $h(\gamma_i;\gamma_{i+1})$ is the state such that the conditional mean of the states between it and $\gamma_{i+1}$ is exactly $\gamma_i$.

\begin{proposition} \label{prop:sufficient}
	Suppose there are three or more actions. Then every commitment solution associated with a bi-pooling partition is implementable if
	\begin{equation} \label{eq:convexity}
		\frac{u_{i+1}-u_{i}}{\gamma_{i+1}-\gamma_{i}}>\frac{u_{i}-u_{i-1}}{\gamma_{i} - \max \left\{\gamma_{i-1},h\left(\gamma_{i} ; \gamma_{i+1}\right)\right\}}
	\end{equation}
	for all $i =2, \ldots, n-1$. Consequently, the commitment payoff is attained in an equilibrium of the disclosure game.
\end{proposition}

In Condition \eqref{eq:convexity}, $u_{i+1} - u_{i}$ is the sender's marginal benefit of inducing a higher action evaluated at action $i$ and $\gamma_{i+1} - \gamma_i$ is the difference in cutoffs for inducing actions $i+1$ and $i$ under complete information, respectively.
\cref{prop:sufficient} suggests that in a communication environment, if inducing a marginally higher action is either sufficiently more profitable or sufficiently more difficult (requiring a sufficiently larger expected state), or both, then the sender does not benefit from commitment power.
Put differently, the sender does not value commitment when his value function increases sufficiently fast in the expected state.

The intuition behind \cref{prop:sufficient} is as follows. 
To establish the sufficiency of \eqref{eq:convexity}, we argue that any partition that is not revelation-proof must also fail optimality in the commitment problem.
Take any barely obedient bi-pooling partition $\mathcal{B}$ that is not revelation-proof.
Then some action $i$ is recommended in some states in which the sender would prefer to fully reveal the state to induce a higher action instead; that is, $\max B_i > \gamma_{i+1}$.
To illustrate how $\mathcal{B}$ can then be strictly improved, suppose $B_i$ is an interval.
Then $\max B_i > \gamma_{i+1}$ and obedience implies $\gamma_{i+1} \in B_i$.
Next, modify the partition by shrinking $B_i$ and shifting the probability of recommending action $i$ to actions $i-1$ and $i+1$.
Recommending action $i+1$ can still be made barely obedient by inducing belief $\gamma_{i+1}$.
At the same time, action $i-1$ can now be induced at $\gamma_{i-1}$ if $\gamma_{i-1}$ is not too low, and otherwise at $h(\gamma_{i};\gamma_{i+1})$.
Either way, condition \eqref{eq:convexity} ensures that such a local mean-preserving spread is profitable by requiring the sender's utility to be ``convex enough'' with respect to the cutoffs $\{\gamma_i\}_{i\in N}$.
Consequently, any barely obedient partition associated with a commitment solution must be revelation-proof, and thus every such commitment solution is implementable.

Imposing a further assumption on the prior, Condition \eqref{eq:convexity} can be simplified. 

\begin{corollary} \label{cor:inc-dens}
    Suppose there are three or more actions. If $f$ is increasing, and $u_{i+1}-u_{i} \ge u_{i}-u_{i-1}$ and $\gamma_{i+1}-\gamma_{i} \le \gamma_{i}-\gamma_{i-1}$ hold for all $i = 2, \ldots, n-1$, with at least one inequality being strict for some $i$, then every commitment solution associated with a bi-pooling partition is implementable. Consequently, the commitment payoff is attained in an equilibrium.
\end{corollary}

An increasing prior density is equivalent to a convex prior CDF. This condition is satisfied by the uniform distribution and more generally by the family of power distributions on $[0,1]$, with the CDF given by $F(x) = x^\alpha$, $\alpha \ge 1$.

\subsection{The Special Case of Ternary Actions} \label{ss:ter}
It is instructive to take a deeper dive into the case in which the receiver has three actions. In this case, a partition can be written as $\mathcal{B} = \{B_1, B_2, B_3\}$. As implied by \cref{thm:sender-pref-eq-characterization}, in a sender-preferred laminar equilibrium partition, $B_1$ must be an interval if it is not null. Moreover, $B_2$ and $B_3$ either are both intervals or are such that $B_2 \subseteq \conv{(B_3)}$, meaning that $\mathcal{B}$ is a bi-pooling partition.\footnote{Both $B_1$ and $B_2$ may be empty, but $B_3$ cannot be: otherwise, revelation proofness is violated.} Consequently, \cref{thm:sender-pref-eq-characterization} implies that revelation proofness boils down to $\max B_1 \le \gamma_2$.

Armed with these observations, a sender-preferred equilibrium can be explicitly solved.

\begin{proposition} \label{claim:ternary-sp}
	Suppose that $n = 3$.
    A sender-preferred equilibrium is associated with a barely obedient bi-pooling partition $\mathcal{B}$ that either is also associated with the commitment solution or is such that $B_1 = [0, y]$, $B_2 = [z, \gamma_3]$, and $B_3 = [y, z] \cup [\gamma_3,1]$, where $z > 0$ and $y \ge 0$ solve the following system of equations:
	\[
	\begin{split}
	    \mathbb{E}\left[\omega \mid \omega \in\left[z, \gamma_{3}\right]\right] & = \gamma_{2}; \\
	    \mathbb{E}\left[\omega \mid \omega \in\left[y, z\right] \cup\left[\gamma_{3}, 1\right]\right]& =\gamma_{3}.
	\end{split}
	\]
\end{proposition}

When there are only three actions, the only reason that a commitment solution is not implementable is that the ``middle'' action, $2$, is recommended too often. Therefore, if no commitment solution is implementable, in the bi-pooling partition associated with a sender-preferred equilibrium, action $2$ is recommended as frequently as revelation proofness allows: that is, the upper bound of $B_2$ must coincide with $\gamma_3$.
 
The sufficient conditions can be further simplified when $|N| = 3$. 

\begin{corollary} \label{corollary:ternary-sufficient}
    If $n = 3$, $f$ is increasing, and $u_3 > 2u_2$, then all commitment solutions are implementable.
\end{corollary}

Applying \cref{corollary:ternary-sufficient} to our introductory example, since $u_2 = p$ and $u_3 = 1$, as long as $p < 0.5$, the seller does not benefit from commitment power. In other words, even prior to solving the commitment problem, we know that a sender-preferred equilibrium partition can be identified from the commitment solution. 

\section{Applications} \label{s:apl}
\subsection{Selling with Quality Disclosure}
We first consider a variant of the sales encounter model studied in Section 5 of \cite{milgrom1981good}. The state of the world, $\omega$, is interpreted as the quality of the seller's product. Let $p>0$ be the unit price, and for simplicity, assume that there is no quantity discount. Denote the seller's constant unit cost by $c$, where $0 \le c < p$. The product is indivisible: the buyer can only buy integer units of the product. The buyer's utility from purchasing $q$ units is $\omega U(q) - p q$, where $U\colon \mathbb{R}_+ \to \mathbb{R}$ is a bounded, strictly increasing, strictly concave three-times differentiable function with $U(0) = 0$. We further assume that $U(q) - pq$ is maximized at $n > 1$. As a consequence, the buyer buys at most $n$ units of the product, and she buys nothing if $\omega$ is close enough to 0.

The only significant difference between this model and that of \cite{milgrom1981good} is that he considers a perfectly divisible product and hence the seller's value function is strictly increasing. In our case, however, indivisibility makes the seller's value function a step function with $n$ jumps. For a perfectly divisible product, \citeauthor{milgrom1981good} shows that every equilibrium of the game features full revelation: the seller sends $m = \{\omega\}$ for each $\omega \in (0,1]$, resulting in the buyer-preferred outcome. With indivisibility, however, the seller may be able to gain considerably from verifiable communication, attaining his commitment payoff.

To state the result, let
\[
A^U(q) \coloneqq -\frac{U''(q)}{U'(q)} \quad \text{and} \quad P^U(q) \coloneqq -\frac{U'''(q)}{U''(q)}
\]
denote the coefficients of absolute risk aversion and absolute prudence, respectively.

\begin{proposition} \label{msm}
	If $f$ is increasing and $P^U > 2A^U$, there exists an equilibrium of this game in which the seller is as well off as he would be if he had commitment power.
\end{proposition}

The assumption of an increasing prior density can be interpreted as meaning that it is common knowledge that the consumer is relatively confident about the quality of the product. The condition $P^U > 2A^U$ is satisfied by, for example, a constant relative risk aversion (CRRA) utility function with parameter $0 < \sigma < 1$.

\subsection{Influencing a Voter}

Consider an amendment voting setting where a voter chooses among three alternatives:
maintaining the status quo (no bill, action $1$); the amended bill (action $2$); and
the (unamended) bill (action $3$).\footnote{See \cite{ek80} and \cite{e81}
for examples of amendment voting.} The state of the world is $\omega \in [0,1]$. The
voter has linear preferences: her utility from action $k \in \{1,2,3\}$ in state $\omega$ is given by 
$\alpha_k + \lambda_k \omega$. Moreover, $\lambda_3 > \lambda_2 > \lambda_1 = 0$
and $0 = \alpha_1 > \alpha_2 > \alpha_3$. Let $\gamma_2$ and $\gamma_3$ denote the cutoff
states at which the voter is indifferent between actions $1$ and $2$, and actions $2$
and $3$, respectively:\footnote{That is,%
\[
\gamma_2 = -\frac{\alpha_2}{\lambda_2} \quad \text{and} \quad \gamma_3 =
\frac{\alpha_2 - \alpha_3}{\lambda_3 - \lambda_2}.
\]
}
we impose $\gamma_2 < \gamma_3$. The voter's preferences are illustrated in \cref{vot}.

\begin{figure}[ht!]
    \centering
    \begin{tikzpicture}[scale=0.9, >=Stealth]
        \draw[thick, ->] (0,0) -- (6,0) node[below left]{$\omega$};
        \draw[thick, ->] (0,-2.5) -- (0,3.5) node[below right]{voter's utility};
        \draw[thick, Orange] (0,0) -- (5,0) node[above right]{action $1$};
        \draw[thick, blue] (0,-0.75) -- (5,1.5) node[right]{action $2$};
        \draw[thick, red] (0,-2) -- (5,2) node[above right]{action $3$};
        \draw (5/3,0) -- (5/3,0.1) node[above]{$\gamma_2$};
        \draw[dashed] (25/7,0) node[below]{$\gamma_3$} -- (25/7,0.8572);
    \end{tikzpicture}
    \caption{The voter's utilities.}
    \label{vot}
\end{figure}

In this model, the voter's preferences over action $k$ are characterized by two parameters: $\alpha_k$, her \emph{reference point} for action $k$ (her cardinal utility when the state is zero), and $\lambda_k$, her \emph{state sensitivity} for action $k$
(how fast her cardinal utility increases in the state). The voter agrees that when the state is low (intermediate, high), action $1$ (action $2$, action $3$, respectively) is optimal.

There is an expert who observes the state and discloses verifiable information to the voter \citep[e.g.,][]{jt13}. The expert's preferences satisfy $u_3 > u_2 > u_1 = 0$; that is, the expert strictly prefers the bill to the
amended bill, and the amended bill to no bill. This is a direct application of our model with $n = 3$ and cutoffs $\gamma_2$, $\gamma_3$, as defined above.\footnote{More generally, consider a group of voters with an odd number of members, each with linear preferences $\alpha^j_k + \lambda^j_k\omega$ satisfying the same sign restrictions as above. Suppose a voting rule satisfying the Condorcet winner criterion is employed. When voters' preferences are single-peaked, the Condorcet winner coincides with the median voter's most preferred alternative. If the median voter is the same across all states---which holds, for example, when voters are ordered so that $\gamma^i_2 \ge \gamma^j_2$ and $\gamma^i_3 > \gamma^j_3$ for all $i > j$---then the expert's problem reduces to communicating with the median voter alone, and all results below apply to the setting verbatim.} The expert-preferred equilibrium is therefore characterized by \cref{claim:ternary-sp}.

\cref{claim:influencing-voters} shows that the expert can be hurt if the voter becomes ``more inclined toward'' the bill in the sense that, all else equal, either the reference point or the state sensitivity for action $3$ increases.

\begin{proposition} \label{claim:influencing-voters}
    If no commitment solution is implementable and at least one of the following
    happens:
    \begin{itemize}[noitemsep,topsep=2.5pt]
        \item[(i)] the voter's state sensitivity for the bill, $\lambda_3$, increases,
        \item[(ii)] the voter's reference point for the bill, $\alpha_3$, increases,
    \end{itemize}
    then the expert's payoff in his preferred equilibrium may decrease.
\end{proposition}

When either (i) or (ii) occurs (or both), $\gamma_3$ decreases. This implies that the expected state required to pass the bill is lowered, which benefits the expert directly: he can more frequently induce passage of the original bill. However, this also introduces an adverse indirect effect. Recall from the discussion after \cref{claim:ternary-sp} that when no commitment solution is implementable, the upper
bound of $B_2$ is pinned to $\gamma_3$ by the binding revelation-proofness constraint. As $\gamma_3$ falls, this constraint tightens, limiting the range of states over which the expert can credibly recommend action $2$. The expert is therefore harmed when this indirect effect dominates the direct one. This may occur when the utility gap between the bill and the amended bill is smaller than that between the amended bill and the status quo (i.e., $u_3 - u_2 < u_2 - u_1$), which is plausible in many voting scenarios.

\section{Conclusion}

This paper revisits a classic persuasion game \citep[e.g.,][]{milgrom1981good} by changing one feature of the canonical model: the receiver has only finitely many actions. When the receiver can adjust her action finely, even a small improvement in beliefs leads to a better action for the sender, and the usual unraveling logic leaves little room for withholding information. With finitely many actions, by contrast, small changes in beliefs need not change the receiver's action. This creates room for credible pooling even when the sender can prove any true fact, so full revelation is no longer the sole relevant equilibrium outcome. We show that the resulting set of equilibrium payoffs is organized by partitions of the state space with a simple geometric property---the laminar structure. Any sender-worst equilibrium is payoff equivalent to a fully revealing equilibrium, whereas a sender-preferred equilibrium often involves pooling nonadjacent states. We characterize such sender-preferred equilibria and use the characterization to compare sender-preferred equilibrium payoff with the commitment payoff: the payoff the sender would obtain if he could commit to an information structure, as in the information-design literature. This comparison identifies conditions under which the commitment payoff can be achieved in equilibrium.

The analysis yields broader lessons about disclosure and information design. The same environment can sustain a fully revealing equilibrium, a sender-preferred equilibrium with substantial pooling, and every payoff in between through simple laminar equilibria. This observation helps explain the variety of disclosure policies observed in practice and qualifies the welfare conclusions one can draw from the existence of a fully revealing equilibrium alone: observing that voluntary disclosure admits a fully revealing equilibrium does not imply that it delivers the receiver-preferred outcome; mandatory disclosure can still substantially improve receiver welfare by eliminating sender-preferred pooling. At the same time, our sufficient conditions for attaining the commitment payoff identify information-design problems for which the commitment assumption can be relaxed.
On the other hand, if those conditions are met, the sender-preferred equilibrium and the equilibrium payoff set can be identified by solving the corresponding information-design problem.
When the commitment payoff is unattainable, the sender-preferred equilibrium captures the trade-off between inducing actions more favorable to the sender and ensuring that no type wants to deviate by fully revealing the state.

\appendix

\section{Omitted Proofs and Details} \label{section:appendix}
\subsection{Proofs for \autoref{section:eq-analysis}}

\subsubsection{Preliminary characterization}

In what follows, we employ the following result, which is essentially Theorem 1 in \cite{titova2025persuasion} adapted to our setting.\footnote{The messages in this paper are closed subsets of the state space, while in \cite{titova2025persuasion} they are Borel subsets of the state space.} We provide a proof below for completeness.

\begin{lemma}\label{lemma:eqm-partition}
    $\mathcal{B}$ is an equilibrium partition if and only if it is obedient and revelation-proof.
\end{lemma}
\begin{proof}
    Necessity is straightforward: if $\mathcal{B}$ is not revelation-proof or obedient, then in every associated assessment, the sender has a profitable deviation to full revelation or the receiver is not best responding to an on-path message. For sufficiency, suppose that $\mathcal{B}$ is an obedient and revelation-proof partition. Let $(\sigma,\tau,\beliefm)$ be an associated assessment such that the receiver (1) has {\it maximally skeptical} off-path beliefs and (2) best responds according to her posterior mean and breaks ties in the sender-adversarial manner when indifferent. Specifically, for each $m \notin \mathcal{B}$, let $\beliefm(\min m \sep m) = 1$, and $\tau(j \sep m) = 1$ if $\min m \in [\gamma_{j-1},\gamma_j)$ and $j \in N \smallsetminus \{ n \}$, or $\min m \in [\gamma_{n-1},\gamma_n]$ and $j = n$. Then, $(\sigma,\tau,\beliefm)$ is an equilibrium, and thus $\mathcal{B}$ is an equilibrium partition. Indeed, equilibrium conditions \ref{eq-cond-2}, \ref{eq-cond-3}, and \ref{eq-cond-4} are satisfied by construction. Equilibrium condition \ref{eq-cond-1} (the sender has no profitable deviations at each $\omega$) is satisfied because if $\omega \in B_i$ and $\omega \notin B_{i+1} \cups B_n$, then the sender's interim payoff is $u_i$; deviations to on-path messages with a higher index are not feasible, and deviations to on-path messages with a lower index or to off-path messages yield an interim payoff of at most $u_i$.
\end{proof}

\subsubsection{Towards the proof of \cref{thm:eq-payoff-set}}

To prove \cref{thm:eq-payoff-set}, we will need four auxiliary results.

\begin{claim}\label{claim:n-pt-supp}
    Let $H$ be a mean-preserving contraction of $F$ with $|\supp{(H)}| \le n$. Then there exist cutoffs  $0 \eqqcolon d_0 \le d_1 \le \ldots \le d_{m-1} \le d_m \coloneqq 1$ with $m \le |\supp{(H)}|$ such that $\int_{0}^{x} H(q) \ddd q \le \int_{0}^{x} F(q) \ddd q$ on $[d_{i-1}, d_{i}]$ for all $i = 1, \ldots, m$ and the inequality binds only at $d_{i-1}$ and $d_{i}$.
\end{claim}

\begin{proof}[Proof of \cref{claim:n-pt-supp}]
    First, let $\supp(H) = \{\widetilde{\gamma}_i\}_{i=1}^{k}$, where $k \le n$ since $|\supp(H)| \le n$ and $\widetilde{\gamma}_i < \widetilde{\gamma}_j$ if $i < j$. 
    Since $F(0) = H(0) = 0$, $F$ is strictly increasing (since $f > 0$), and $H$ is a step function, there exists $\varepsilon > 0$ such that $F(x) > H(x)$ for all $x \in [0, \varepsilon]$. Since $H \in MPC(F)$, we have $\int_{0}^{1} H(q) \ddd q = \int_{0}^{1} F(q) \ddd q$, and hence the set \[D_1 = \left\{x \in [\varepsilon,1] \colon \int_{0}^{x} H(q) \ddd q = \int_{0}^{x} F(q) \ddd q\right\}\] is nonempty. Let $d_1 = \inf D_1$. If $d_1 = 1$, then set $m=1$ and the proof is complete. For the rest of the proof, suppose that $d_1 < 1$. 
    
    Observe that $\widetilde{\gamma}_{1} < d_1 < \widetilde{\gamma}_{k}$: if $d_1 \le \widetilde{\gamma}_{1}$, then $F(x) > H(x) = 0$ for all $x \in [0, d_1)$, which contradicts the definition of $d_1$; if $d_1 \ge \widetilde{\gamma}_{k}$, then $F(x) < H(x) = 1$ for all $x \in [d_1, 1]$, so that $\int_{0}^{1} H(q) \ddd q \neq \int_{0}^{1} F(q) \ddd q$, a contradiction. 
    
    Next, we argue that $F(d_1) \ge H(d_1)$. Suppose to the contrary that $F(d_1) < H(d_1)$. Then, since $H$ is a CDF and hence right-continuous, there must exist $\delta > 0$ such that $F(x) < H(x) = H(d_1)$ for all $x \in [d_1, d_1+\delta]$, where the equality follows from the fact that $H$ is a step function. Then,
    \begin{align*}
        \int_{0}^{d_1+\delta} F(q) \ddd q & = \int_{0}^{d_1} F(q) \ddd q + \int_{d_1}^{d_1+\delta} F(q) \ddd q = \int_{0}^{d_1} H(q) \ddd q + \int_{d_1}^{d_1+\delta} F(q) \ddd q \\
        & < \int_{0}^{d_1} H(q) \ddd q + \int_{d_1}^{d_1+\delta} H(q) \ddd q,
    \end{align*}
    which contradicts the assumption that $H$ is a MPC of $F$. 
    
    Now, let $j = \min\{i:\widetilde{\gamma}_{i} > d_1\}$. Given that  $d_1 < \widetilde{\gamma}_{j}$, $F(d_1) \ge H(d_1)$, $F$ is strictly increasing, and $H$ is a step function, there exists $\eta > 0$ such that $F(x) > H(x)$ for all $x \in [d_1, d_1 + \eta]$. Consequently, $\int_{0}^{1} H(q) \ddd q = \int_{0}^{1} F(q) \ddd q$ implies that the set 
    \[D_2 = \left\{x \in [d_1+\eta,1] \colon \int_{0}^{x} H(q) \ddd q = \int_{0}^{x} F(q) \ddd q\right\}\] 
    is nonempty. Let $d_2 = \inf D_2$. If $d_2 = 1$, then set $m=2$ and the proof is complete. For the remainder of the proof, suppose that $d_2 < 1$. Using the same steps as above, one can show that $d_2 > \widetilde{\gamma}_{j}$ and $F(d_2) \ge H(d_2)$. Proceeding inductively, one can find $d_{t}$ with $t \le k \le n$ such that $d_t > \widetilde{\gamma}_{k}$. It must be that $d_t = 1$: suppose not, then because $F$ is strictly increasing, $F(x) < 1$ on $(d_k, 1)$; but $H(x)  = 1$ on the same interval, which implies that $\int_{0}^{1} H(q) \ddd q > \int_{0}^{1} F(q) \ddd q$, a contradiction. Now set $m = t$, the proof is complete.
\end{proof}

\begin{claim} \label{claim:induce-laminar}
    If $H$ is a mean-preserving contraction of $F$ with $|\supp(H)| \le n$, then it induces a laminar partition.
\end{claim}

\begin{proof}
    By \cref{claim:n-pt-supp}, on each of the $m$ intervals such that the MPC constraint only binds at the endpoints, the mass is redistributed to at most $n$ points, and there can be at most $m$ such intervals. We show that every such interval admits a laminar partition; the definition of a laminar partition then implies that the resulting partition is still laminar by taking the union.
        
    Denote an arbitrary interval on which the MPC constraint only binds at the endpoints by $I := [a,b]$; that is, $\int_{0}^{x} H(q) \ddd q \le \int_{0}^{x} F(q) \ddd q$ for all $q \in I$, and the inequality binds only at $a$ and $b$.
    
    The remainder of the proof is very similar to the proof of Lemma 11 in \cite{candogan2023optimal}, and hence we only provide an outline here; readers interested in details are directed to that paper. Let $K = |\supp{(H)} \cap I| \le n$; the proof proceeds by induction on $K$. If $K = 1$, let $\supp{(H)} \cap I = \{\widetilde{\gamma}_\ell\}$; then clearly $\{B_\ell\}$ where $B_\ell = I$ is a laminar partition of $I$. If $K = 2$, let $\supp{(H)} \cap I = \{\widetilde{\gamma}_\ell, \widetilde{\gamma}_m\}$ where $\ell < m$; then by Lemma 4 in \cite{absy}, $\{B_\ell, B_m\}$ can be chosen such that $B_\ell = [c,d]$ and $B_m = [a,c] \cup [b,d]$, which is laminar. 
    
    Taking $K = 2$ as the base case, consider $K > 2$; the induction hypothesis holds for $K - 1$. One can find a closed interval $B_\ell$ such that (i) $\mu_F(B_\ell) = h(\widetilde{\gamma}_\ell)$, where $h$ is the probability mass function (pmf) of $H$, and $\widetilde{\gamma}_\ell = \min (\supp{(H)} \cap I)$,\footnote{The only difference between our proof and \cite{candogan2023optimal}'s is that they choose the mass point of $H$ that has the largest index number on $I$, and for our purpose we work with the smallest. Their proof, however, goes through despite this difference.} and (ii) $\mathbb{E}[\omega \mid \omega \in B_\ell] = \widetilde{\gamma}_\ell$. Consequently, conditional on $\omega \not \in B_\ell$, $H$ only has $K-1$ mass points, and Lemma 12 in \cite{candogan2023optimal} shows that it is a MPC of $F$. Invoking the inductive hypothesis, a laminar partition of $I$, $\{\hat{B}_i\}_{i \in \mathcal{T}}$, is obtained, where $\mathcal{T}:=\{k \ne \ell: \gamma_k \in \supp{(H)} \cap I\}$. For every $i \in \mathcal{T}$, let $B_i \coloneqq \clos{(\hat{B}_i \smallsetminus B_\ell)}$. Since $\{\hat{B}_i\}_{i \in \mathcal{T}}$ is laminar, and $\ell < i$ for all $i \in \mathcal{T}$, $\{B_i\}_{i \in \mathcal{T} \cup \{\ell\}}$ is also laminar. 
\end{proof}

\begin{claim}\label{claim:interval-good}
    Suppose that $W \subseteq [0,1]$ and $B = [x,y] \subseteq [0,1]$ such that $\mu_F(W) = \mu_F(B)$ and $\mathbbm{E}[\omega \sep \omega \in W] = \mathbbm{E}[\omega \sep \omega \in B]$. Then, $\inf W \leq x$ and $y \leq \sup W$.
\end{claim}

\begin{proof}
    We prove that $y \leq \sup W$; the proof of $\inf W \leq x$ is analogous. Since $\mu_F(W) = \mu_F(B)$ and $\mathbbm{E}[\omega \sep \omega \in W] = \mathbbm{E}[\omega \sep \omega \in B]$, we have $\int_{W \smallsetminus B} \omega \ddd F(\omega) = \int_{B \smallsetminus W} \omega \ddd F(\omega)$ and $\mu_F(W \smallsetminus B) = \mu_F(B \smallsetminus W)$.

    Suppose, by contradiction, that $\sup W < y$. Then, $W \smallsetminus B \subseteq [0,x]$, $B \smallsetminus W \subseteq [x,y]$ and $\mu_F(W\smallsetminus B) = \mu_F(B \smallsetminus W) > 0$. Furthermore,
    \begin{align*}
        \int_{W\smallsetminus B} \omega \ddd F(\omega) < x \mu_F(W\smallsetminus B) = x \mu_F(B \smallsetminus W)
        < \int_{B \smallsetminus W} \omega \ddd F(\omega),
    \end{align*}
    a contradiction. Therefore, $y \leq \sup W$.
\end{proof}

\begin{claim}\label{claim:tiebreaking}
    Let $(\sigma',\tau',\beliefm')$ be an equilibrium. Define $\tau''$ by
\[
\tau''(i\sep m)=
\begin{cases}
1, & \text{if } i=\max \supp\tau'(\cdot\sep m)\\
0, & \text{otherwise}
\end{cases}
\]
for all $m \in \mathcal{C}$ and $i \in N$. Then $(\sigma',\tau'',\beliefm')$ is also an equilibrium.
\end{claim}

\begin{proof}
Equilibrium conditions 3 and 4 are unchanged because the sender's strategy and the belief system are identical to those in the original equilibrium. Condition 2 also remains true: if $\tau''(i\sep m)>0$, then
$i\in \supp\tau'(\cdot\sep m)$; since $(\sigma',\tau',\beliefm')$ is an equilibrium, \((\sigma^{\prime}, \tau^{\prime \prime}, \beliefm^{\prime})\) must also satisfy condition 2. 

It remains to verify condition 1. To simplify notation, for each message $m$, let
\[
w'(m):=\sum_{i\in N}\tau'(i\mid m)u_i,
\quad \text{and} \quad
w''(m):=\sum_{i\in N}\tau''(i\mid m)u_i.
\]
Because the receiver is only indifferent between actions $i$ and $i+1$ when the expected state after seeing a message is exactly $\gamma_{i+1}$, for every message $m$, $\supp\tau'(\cdot\sep m)$ is either a singleton $\{i\}$ or a pair of adjacent actions $\{i,i+1\}$.

Define a function $c: [0, u_n] \to [0,u_n]$ by $c(z):=u_{\min\{k\in N:\ z\le u_k\}}$. Because $u_i$ is strictly increasing in $i$, $c$ is increasing. For every message $m$ we have $w''(m)=c(w'(m))$: indeed, if $\supp\tau'(\cdot\sep m)=\{i\}$, then $w'(m)=u_i$ and $w''(m)=u_i=c(w'(m))$; if $\supp\tau'(\cdot\sep m)=\{i,i+1\}$, then $w'(m)\in (u_i,u_{i+1})$ and $w''(m)=u_{i+1}=c(w'(m))$.

Now fix $\omega\in\Omega$. Since $(\sigma',\tau',\beliefm')$ is an equilibrium,
\[
\supp\sigma'(\omega)\subseteq
\operatorname*{arg\,max}_{m\in M(\omega)} w'(m).
\]
Because $c$ is increasing,
\[
\operatorname*{arg\,max}_{m\in M(\omega)} w'(m)
\subseteq
\operatorname*{arg\,max}_{m\in M(\omega)} c(w'(m))
=
\operatorname*{arg\,max}_{m\in M(\omega)} w''(m).
\]
Therefore
\[
\supp\sigma'(\omega)\subseteq
\operatorname*{arg\,max}_{m\in M(\omega)} w''(m),
\]
so condition 1 also holds for $(\sigma',\tau'',\beliefm')$. Thus, $(\sigma',\tau'',\beliefm')$ is an equilibrium.
\end{proof}

\subsubsection{Proof of \autoref{thm:eq-payoff-set}} \label{prf:eq-payoff-set}

\paragraph{Proof of Part \ref{thm1-sw}.}

First, observe that the fully revealing partition $\mathcal{A}$ is obedient and revelation-proof since $A_i = [\gamma_{i}, \gamma_{i+1}]$ for each $i \in N$. By \cref{lemma:eqm-partition}, it is an equilibrium partition. Next, we show that the sender's ex-ante payoff cannot be lower than $\underline{V}$ in any other equilibrium. Let $\underline{w}(\omega)\coloneqq \min\{u_i: i \in N, \, \omega \in A_i\}$ be the sender's payoff in state $\omega$ when the receiver knows the state and breaks ties in the sender-adversarial manner. By definition, $\int_{\Omega} \underline{w}(\omega) \ddd F(\omega) = \sum_{i \in N} \mu_F(A_i)u_i = \underline{V}$. If the sender's ex-ante payoff is strictly below $\underline{V}$ in an equilibrium, then his interim payoff is strictly below $\underline{w}(\omega)$ in a positive measure of states. Then, in each of those states, the sender has a profitable deviation toward sending message $\{ \omega \}$ and receiving $\underline{w}(\omega)$. Therefore, the sender's ex-ante payoff in a sender-worst equilibrium is exactly $\underline{V}$. This completes the proof of Part \ref{thm1-sw}.

\paragraph{Proof of Part \ref{thm1-sp}.}

We proceed in three steps. First, we show that if there exists a sender-preferred equilibrium, then there exists a sender-preferred laminar partitional equilibrium (\cref{lemma:sender-preferred-laminar}). Second, we show that a sender-preferred laminar partitional equilibrium exists (\autoref{lemma:payoff-max-laminar}). Third, we show that $\overline{V} > \underline{V}$ (\cref{lemma:exante-exceed}).

\begin{lemma}\label{lemma:sender-preferred-laminar}
    For any equilibrium in which the receiver plays pure strategy, there exists a laminar equilibrium partition that induces the same posterior mean distribution.
    Furthermore, if there exists a sender-preferred equilibrium, then there exists a sender-preferred laminar partitional equilibrium.
\end{lemma}
\begin{proof}[Proof of \cref{lemma:sender-preferred-laminar}]

Let $(\sigma, \tau, \beliefm)$ be an equilibrium in which the receiver plays pure strategies. For any state $\omega \in \Omega$ in which the sender mixes, i.e., $|\supp \, \sigma(\cdot \sep \omega)| > 1$, since $(\sigma, \tau, \beliefm)$ is an equilibrium, there exists $i \in N$ such that $\tau(i \sep m) = \tau(i \sep m') = 1$ for every $m, m' \in \supp \, \sigma(\cdot \sep \omega)$. Thus, in every $\omega \in \Omega$, there is an action $i$ played with probability 1 in this equilibrium. 
    For every $i \in N$, let 
    \[
    W_i \coloneqq \left\{\omega \in \Omega \colon \tau(i \sep m) = 1 \text{ for all } m \in \supp \, \sigma(\cdot \sep \omega)\right\}.
    \]
	By construction, $\cup_{i \in N} W_i = \Omega$, and $W_i \cap W_j = \varnothing$ for any $i \ne j$.
    By Theorem 1(a) of \cite{titova2025persuasion}, for every $i \in N$, $\widetilde{\gamma}_i \coloneqq \mathbb{E}[\omega \sep \omega \in W_i] \in [\gamma_{i}, \gamma_{i+1}]$, and $W_i \subseteq [0,\gamma_{i+1}]$. 

Consider the PMD $\widetilde{G}$ with $\supp \widetilde{G} \subseteq \{\widetilde{\gamma}_i\}_{i=1}^n$ whose probability mass function is given by $\widetilde{g}(\widetilde{\gamma}_i) = \mu_F(W_i)$. By construction, $\widetilde{G}$ is a mean-preserving contraction of the PMD induced by the equilibrium $(\sigma, \tau, \beliefm)$. Since the PMD induced by an equilibrium is itself a mean-preserving contraction of $F$, $\widetilde{G}$ is a mean-preserving contraction of $F$ with $|\supp(\widetilde{G})| \le n$. Then by \autoref{claim:induce-laminar}, $\widetilde{G}$ is also induced by a laminar partition $\mathcal{B}$ with generic element $B_i$. By construction, for every $i \in N$, $\mathbb{E}[\omega \sep \omega \in B_i] = \widetilde{\gamma}_i$, which implies obedience. Also by construction, $\mu_F(B_i) = \mu_F(W_i)$ for each $i \in N$.

We show next that $\mathcal{B}$ is also revelation-proof, i.e., $B_i \subseteq [0,\gamma_{i+1}]$ for all $i \in N$. Fix $i\in N$. Because $\mathcal{B}$ is laminar, there exists $k \le i$ such that for every action $j \in \{k, \ldots, i\}$, $\conv(B_j) \subseteq \conv(B_i)$; and for every action $j \in \{1, \ldots, k-1, i+1, \ldots, n\}$, $\mu_F(B_j \cap \conv{(B_i)}) = 0$.
This implies that $\widehat{B}_i \coloneqq B_k \cups B_i = \conv(B_i)$ is an interval. Let $\widehat{W}_i \coloneqq W_k \cups W_i$. Since for each $j=k,\ldots, i$, $\mu_F(B_j) = \mu_F(W_j)$ and $\mathbbm{E}[{\omega \sep \omega \in B_j}] = \mathbbm{E}[\omega \sep \omega \in W_j]$, and since $W_j \cap W_\ell = \varnothing$ and $\mu_F(B_j \cap B_\ell) = 0$ for any $j \ne \ell$, we have $\mu_F(\widehat{B}_i) = \mu_F(\widehat{W}_i)$ and $\mathbbm{E}[\omega \sep \omega \in \widehat{B}_i] = \mathbbm{E}[\omega \sep \omega \in \widehat{W_i}]$. 
By \cref{claim:interval-good}, we have $\widehat{B}_i \subseteq  \conv\left(\widehat{W_i}\right) \subseteq [0,\gamma_{i+1}]$, where the last inclusion follows since $W_i \subseteq [0,\gamma_{i+1}]$ for all $i \in N$.

To prove the second statement, suppose that there exists a sender-preferred equilibrium. An immediate consequence of \cref{claim:tiebreaking} is that in a sender-preferred equilibrium, the receiver mixes on a null set of states. Therefore, we can focus on a sender-preferred equilibrium $(\sigma,\tau,\beliefm)$ in which the receiver breaks ties in favor of the sender. 
Then by the first statement, there must exist a sender-preferred laminar equilibrium partition.
\end{proof}

\begin{lemma} \label{lemma:payoff-max-laminar}
    Among all obedient and revelation-proof laminar partitions, there is one that maximizes the sender's ex-ante payoff.
\end{lemma}

\begin{proof}
    The problem of finding a laminar partitional equilibrium that maximizes the sender's ex-ante payoff can be written as
\begin{align*}
    \max_{\mathcal{B} ~ \text{laminar}} ~ & \sum_{i \in N} u_i \mu_F(B_i) \\
    \text{s.t. } ~~ & \bigcup_{i \in N} B_i = [0,1] \\
    & \mu_F(B_i \cap B_j) = \varnothing \text{ for all } i \ne j \\
    & \gamma_i \le \mathbb{E}[\omega \mid \omega \in B_{i}] \le \gamma_{i+1} \text{ for all } i \in N \\
    & B_i \subseteq [0,\gamma_{i+1}] \text{ for all } i \in N
\end{align*}
To prove the lemma, it suffices to show that this problem has a solution, by \cref{lemma:sender-preferred-laminar}.

Although an action $j$ may be never recommended, one can still assume that $B_j$ is nonempty by setting $\mu_F(B_j) = 0$. Because $\mathcal{B}$ is laminar, it is without loss of generality to assume that for each $i \in N$, $B_i$ is the union of at most $n$ intervals (cf. \cref{obs:laminar-i-intervals}). Consequently, adding singletons if necessary, one can always set $B_i$ as the union of exactly $n$ convex sets $\{B_{i,s}\}_{s=1}^{n}$ such that $\mu_F(B_{i,s'} \cap B_{i, s''}) = 0$ for all $s' \ne s''$.

Let $\mathcal{C}_c([0,1])$ denote the set of closed, nonempty, and convex subsets of $[0,1]$ endowed with the Hausdorff distance; to simplify notation, we write $\mathcal{C}_c$ henceforth. By Proposition 1 in \cite{ely2022cake}, $\mathcal{C}_c$ is compact; by Tychonoff's theorem, $\mathcal{C}_c^{n^2}$ is compact in the product topology. The problem above can be transformed to
\begin{align}
    \max_{\{B_{i,s}\} \in \mathcal{C}_c^{n^2}} ~ & \sum_{i \in N} u_i \sum_{s=1}^{n} \mu_F(B_{i,s}) \label{tp1} \\
    \text{s.t. } \quad ~~ & \cup_{i,s} B_{i,s} = [0,1] \nonumber \\
    & \mu_F(B_{i',s'} \cap B_{i'', s''}) = 0 \text{ for all } (i',s') \ne (i'',s'') \nonumber \\
    & \mu_F\left(\cup_{s=1}^{n} B_{i,s}\right)\gamma_i \le \int_{\cup_{s=1}^{n} B_{i,s}} \omega \ddd  \mu_F(\omega) \le \mu_F\left(\cup_{s=1}^{n} B_{i,s}\right)\gamma_{i+1} \nonumber \\
    & \cup_{s =1}^{n} B_{i, s} \subseteq [0,\gamma_{i+1}]\nonumber
\end{align}
where the third constraint is equivalent to the conditional mean condition.

Define
\[
\mathcal{D} = \left\{\{B_{i,s}\} \in \mathcal{C}_c^{n^2} \colon \cup_{i,s} B_{i,s} = [0,1], \text{ and } \mu_F(B_{i',s'} \cap B_{i'', s''}) = 0 \text{ for all } (i',s') \ne (i'',s'')\right\};
\]
We claim that $\mathcal{D}$ is compact. To show this, it is enough to show that $\mathcal{D}$ is a closed subset of $\mathcal{C}_c^{n^2}$. Take any $\{B_{i, s}^m\}$ that converges to $\{B_{i, s}\}$ in the product topology, then $B^m_{i,s} \to B_{i,s}$ for each $i$ and $s$. Consequently, because the limit of convergence in Hausdorff distance is preserved under finite unions,\footnote{See, for example, Lemma 4.5 in \cite{mclennan2018advanced}.} $\cup_{i,s} B^m_{i,s} \to \cup_{i,s} B_{i,s}$. Therefore, if $\cup_{i,s} B^m_{i,s} = [0,1]$, it must be that $\cup_{i,s} B_{i,s} = [0,1]$. Furthermore, if $\mu_F(B^m_{i',s'} \cap B^m_{i'',s''}) = 0$ for all $m$ and $(i',s') \ne (i'',s'')$, the same argument as the second paragraph in the proof of Lemma 2 in \cite{ely2022cake} shows that $\mu_F(B_{i',s'} \cap B_{i'',s''}) = 0$ for all $(i',s') \ne (i'',s'')$. Therefore, if $\{B^m_{i,s}\} \in \mathcal{D}$ for each $m$ and $\{B_{i, s}^m\} \to \{B_{i, s}\}$, it must be that $\{B_{i, s}\} \in \mathcal{D}$. Thus, $\mathcal{D}$ is a closed subset of $\mathcal{C}_c^{n^2}$. 

Problem (\ref{tp1}) is equivalent to
\begin{align}
    \max_{\{B_{i,s}\} \in \mathcal{D}} ~ & \sum_{i \in N} u_i \sum_{s=1}^{n} \mu_F(B_{i,s}) \label{tp2} \\
    \text{s.t. } \quad ~~ & \left(\sum_{s=1}^{n} \mu_F(B_{i,s})\right)\gamma_i \le \sum_{s=1}^{n}\int_{B_{i,s}} \omega \ddd \mu_F(\omega) \le \left(\sum_{s=1}^{n} \mu_F(B_{i,s})\right) \gamma_{i+1} \label{eq:exi-ob} \\
    & \cup_{s =1}^{n} B_{i, s} \subseteq [0,\gamma_{i+1}] \label{eq:exi-rp}
\end{align}
where constraint (\ref{eq:exi-ob}) supersedes the third constraint in problem (\ref{tp1}) because for any $\{B_{i,s}\} \in \mathcal{D}$, $\mu_F(B_{i',s'} \cap B_{i'', s''}) = 0$ for all $(i',s') \ne (i'',s'')$. 

By the extreme value theorem, to show that a solution to problem (\ref{tp2}) exists, it suffices to show that (i) the objective function is continuous, and (ii) the constraint set is nonempty and compact. Clearly, the constraint set is nonempty: for each $i \in N$, consider
\[
B_{i,1} = A_i, \text{ and } B_{i,s} = \{\gamma_{i+1}\} \text{ for all } s = 2, \ldots, n,
\]
then $\{B_{i,s}\}$ is feasible for this problem. Furthermore, by Proposition 1 in \cite{ely2022cake}, $\mu_F$ is continuous on $\mathcal{C}_c$, and hence the objective is continuous. 

Since $\mathcal{D}$ is compact, to show that the constraint set is compact, it suffices to show that each of the constraints, (\ref{eq:exi-ob}) and (\ref{eq:exi-rp}), defines a closed subset of $\mathcal{D}$. Observe that if $C^m \to C$ and $D^m \to D$ with $C^m \subseteq D^m$ for all $m$, then $C \subseteq D$.\footnote{See, for example, page 15 in \cite{ely2022cake}.} Then because the limit of convergence in Hausdorff distance is preserved under finite unions, if $\cup_{s=1}^{n} B^m_{i,s} \subseteq [0,\gamma_{i+1}]$ for each $i$, it must be that $\cup_{s=1}^{n} B_{i,s} \subseteq [0,\gamma_{i+1}]$ for each $i$. Hence, (\ref{eq:exi-rp}) defines a closed subset of $\mathcal{D}$.

Next, we show that (\ref{eq:exi-ob}) does the same, which is equivalent to showing that if $\{B_{i, s}^m\} \to \{B_{i,s}\}$ where $\{B_{i, s}^m\} \in \mathcal{D}$ for each $m$, then 
\[
\left(\sum_{s=1}^{n} \mu_F(B^m_{i,s})\right)\gamma_i \le \sum_{s=1}^{n}\int_{B^m_{i,s}} \omega \ddd \mu_F(\omega) \le \left(\sum_{s=1}^{n} \mu_F(B^m_{i,s})\right) \gamma_{i+1}
\]
for all $m$ and $i$ implies that
\[
\left(\sum_{s=1}^{n} \mu_F(B_{i,s})\right)\gamma_i \le \sum_{s=1}^{n}\int_{B_{i,s}} \omega \ddd \mu_F(\omega) \le \left(\sum_{s=1}^{n} \mu_F(B_{i,s})\right) \gamma_{i+1}
\]
for all $i$. Because $\mu_F$ is continuous on $\mathcal{C}_c$, $\sum_{s=1}^{n} \mu_F(B^m_{i,s}) \to \sum_{s=1}^{n} \mu_F(B_{i,s})$ for all $i$. Consequently, $\left(\sum_{s=1}^{n} \mu_F(B^m_{i,s})\right) \gamma_i \to \left(\sum_{s=1}^{n} \mu_F(B_{i,s})\right) \gamma_i$, and $\left(\sum_{s=1}^{n} \mu_F(B^m_{i,s})\right) \gamma_{i+1} \to \left(\sum_{s=1}^{n} \mu_F(B_{i,s})\right) \gamma_{i+1}$ for each $i$. Therefore, it only remains to show that
\[
\sum_{s=1}^{n}\int_{B^m_{i,s}} \omega \ddd \mu_F(\omega) \to \sum_{s=1}^{n}\int_{B_{i,s}} \omega \ddd \mu_F(\omega),
\]
which is a consequence of $E \mapsto \int_E \omega \ddd  \mu_F(\omega)$ being continuous on $\mathcal{C}_c$. 

To prove this claim, we show that $E\mapsto\int_E \omega \ddd  \mu_F(\omega)$ is both upper- and lower-semicontinuous. To see that it is upper-semicontinuous, pick any $\varepsilon > 0$, and let $E \in \mathcal{C}_c$; we show that there exists $\delta > 0$ such that for every $E' \in N(E, \delta)$, where $N(E,\delta)$ is the $\delta$-neighborhood of $E$, $\int_{E'}\omega \ddd  \mu_F(\omega) < \int_{E} \omega \ddd  \mu_F(\omega) + \varepsilon$. Because $E \in \mathcal{C}_c$, there exist $a,b \in [0,1]$ such that $E = [a,b]$. The key observation here is that for any $E' \in N(E, \delta)$, it must be that $E' \subseteq [a-\delta, b+\delta]$. Then
\begin{align*}
    \int_{a-\delta}^{b+\delta} \omega \ddd  \mu_F(\omega) & = \int_{a-\delta}^{a} \omega \ddd  \mu_F(\omega) + \int_{a}^{b} \omega \ddd  \mu_F(\omega) + \int_{b}^{b+\delta} \omega \ddd  \mu_F(\omega) \\
    & \le \mu_{F}([a-\delta, a]) + \int_E \omega \ddd  \mu_F(\omega) + \mu_{F}([b,b+\delta])
\end{align*}
where the inequality holds because $\omega \in [0,1]$. For $\delta$ small enough, since $\mu_F$ is absolutely continuous with respect to the Lebesgue measure, \[\int_{E'} \omega \ddd  \mu_F(\omega) \le \int_{a-\delta}^{b+\delta} \omega \ddd  \mu_F(\omega) < \int_{E} \omega \ddd  \mu_F(\omega) + \varepsilon.\]

Next, we show that $\int_E \omega \ddd  \mu_F(\omega)$ is lower semicontinuous, that is, there exists $\delta > 0$ such that for every $E' \in N(E, \delta)$, $\int_{E'} \omega \ddd  \mu_F(\omega) > \int_{E} \omega \ddd  \mu_F(\omega) - \varepsilon$. Without loss of generality, assume $\delta < (b-a)/2$. Consequently, $[a+\delta, b-\delta]$ is an interval of positive measure, and for any $E' \in N(E, \delta)$, $[a+\delta, b-\delta] \subseteq E'$. Then
\begin{align*}
    \int_{a+\delta}^{b-\delta} \omega \ddd  \mu_F(\omega) & = \int_{a}^{b} \omega \ddd  \mu_F(\omega) - \int_{a}^{a+\delta} \omega \ddd  \mu_F(\omega) - \int_{b-\delta}^{b} \omega \ddd  \mu_F(\omega) \\
    & \ge \int_E \omega \ddd  \mu_F(\omega) - \mu_{F}([a, a+\delta]) - \mu_{F}([b-\delta,b])
\end{align*}
where the inequality follows from the fact that $\omega \ge a$ on $[a,b]$. Consequently, $\int_E \omega \ddd  \mu_F(\omega)$ is both upper- and lower-semicontinuous, and hence continuous. This completes the proof. 
\end{proof}

Let $\mathcal{B}$ denote a laminar partition associated with a sender-preferred equilibrium; by the previous two lemmas, such a partition exists. Denote the sender's ex-ante payoff from $\mathcal{B}$ by $\overline{V}$.

\begin{lemma} \label{lemma:exante-exceed}
    $\overline{V} > \underline{V}$.
\end{lemma}

\begin{proof}

Let $\varepsilon \in (0,\gamma_2)$. 
Also, for each $i \in N \smallsetminus \{ 1,n \}$, let $B_i^\varepsilon = A_i$, $B_1^\varepsilon = [\varepsilon,\gamma_2]$ and $B_n^\varepsilon = [0,\varepsilon] \cup A_n$. It is easy to see that $\mathcal{B}(\varepsilon) :=  \{ B_i^\varepsilon \}_{i \in N}$ is a revelation-proof partition. Furthermore, $\mathbbm{E}[\omega \sep \omega \in B_i^\varepsilon] \in A_i$ for all $i \in N \smallsetminus \{ n \}$ by construction. Finally, the function $\psi(\varepsilon) := \mathbbm{E}[\omega \sep \omega \in B_n^\varepsilon]$ 
is continuous at $0$ and $\psi(0) > \gamma_{n}$. Consequently, there exists $\varepsilon'> 0$ sufficiently small such that $\psi(\varepsilon') \in [\gamma_{n},1]$, which makes $\mathcal{B}({\varepsilon'})$ an obedient partition. By \cref{lemma:eqm-partition}, $\mathcal{B}({\varepsilon'})$ is an equilibrium partition; the sender's ex-ante payoff in that equilibrium is
\begin{align*}
    V(\mathcal{B}(\varepsilon')) = \underline{V} + \mu_F([0,\varepsilon']) (u_n - u_1) > \underline{V},
\end{align*}
which completes the proof.
\end{proof}

\paragraph{Proof of Part \ref{thm1-range}.}

For any $V \in [\underline{V}, \overline{V}]$, we construct an obedient and revelation-proof laminar partition that yields $V$ by ``blending'' the fully informative partition $\mathcal{A}$ and a sender-preferred laminar equilibrium partition $\mathcal{B}$ (which exists by part \ref{thm1-sp}).
For all $z \in \left[0, \gamma_n\right]$, 
\begin{itemize}
    \item for every $i \in N$ such that $\gamma_{i+1} \le z$, let 
$\hat{B}_{i}^{z} \coloneqq A_i$; 
    \item if $i$ is such that such that $\gamma_{i} \le z \le \gamma_{i+1}$, $\hat{B}_{i}^{z} \coloneqq \clos(B_i \smallsetminus [0,\gamma_i]) \cup [\gamma_i, z]$;
    \item for every $i \in N$ such that $z < \gamma_{i}$, let $\hat{B}_{i}^{z} \coloneqq \clos{(B_{i} \smallsetminus [0, z])}$.
\end{itemize}
For any $z \in\left[0, \gamma_n\right]$, and any $i \in N$, $\mathbb{E}\left[\omega \, \big| \, \omega \in B_{i}\right] \in A_i$ implies $\mathbb{E}\left[\omega \, \big| \, \omega \in \hat{B}_{i}^{z}\right] \in A_i$. Consequently, $\hat{\mathcal{B}}(z)\coloneqq\{\hat{B}_{i}^{z}\}_{i=1}^{n}$ is obedient. It is also revelation-proof because both $\mathcal{A}$ and $\mathcal{B}$ are. Finally, because both $\mathcal{A}$ and $\mathcal{B}$ are laminar, it can be checked using \eqref{eqn:laminar-definition} that $\hat{\mathcal{B}}(z)$ is a laminar partition. 

Let $V(z)$ denote the sender's ex-ante payoff from $\hat{\mathcal{B}}(z)$:
$
V(z) \coloneqq \sum_{i \in N} u_{i} \, \mu_{F}\left(\hat{B}_{i}^{z}\right);
$
because $f > 0$, $V(z)$ is continuous in $z$. Since $V(0)=\overline{V} \ge V \ge \underline{V} = V(\gamma_{n})$, by the intermediate value theorem, there exists $z^V \in [0, \gamma_n]$ such that $V(z^V) = V$. Thus, the partition $\hat{\mathcal{B}}(z^V)$ is an obedient and revelation-proof laminar partition that yields $V$.

\subsubsection{Proof of \autoref{thm:sender-pref-eq-characterization}}
To establish \autoref{thm:sender-pref-eq-characterization}, we first prove the following two preliminary results. 

\begin{claim} \label{claim:lowest-element}
    Let $\mathcal{B}$ be a laminar partition associated with a sender-preferred equilibrium, and let $j = \min\{i : \mu_F(B_i) > 0\}$. If $\mathbb{E}[\omega \, | \, \omega \in B_j] > \gamma_j$, then $B_j = [0,d]$ for some $d \le \gamma_{j+1}$. 
\end{claim}

\begin{proof}
    Because $\mathcal{B}$ is a laminar partition, $B_j$ must be an interval, and hence one can write $B_j = [\underline{b}_j, \overline{b}_j]$. Furthermore, revelation proofness implies that $[\underline{b}_j, \overline{b}_j] \subseteq [0, \gamma_{j+1}]$. 
    
    To show that $\underline{b}_j = 0$, it suffices to show that $\inter{(B_j)} \cap \conv{(B_k)} = \varnothing$ for all $k > j$.\footnote{For a subset $S$ of $[0,1]$, $\inter{(S)}$ denotes the interior of $S$.} Suppose not, so $B_j \subseteq \conv{(B_k)}$ for some $k > j$. The laminar structure implies that $B_k$ is the union of at most $k$ closed intervals; let $[\alpha, \beta]$ be an interval such that $\beta \le \underline{b}_j$ and $[\alpha,\beta] \subseteq B_k$; this interval is well-defined because $\mathcal{B}$ is laminar and $B_j \subseteq \conv{(B_k)}$. 
    Because $\mathbb{E}\left[\omega \, | \, \omega \in B_k \right] = \gamma_k$ and $\mathbb{E}\left[\omega \, | \, \omega \in B_j \right] := \gamma_S > \gamma_j$, for small enough $\varepsilon > 0$, one can find $h(\varepsilon)$ such that
    \[
    \mathbb{E}\left[\omega \, \big| \, \omega \in \left(B_k \cup \left[\overline{b}_j - h(\varepsilon), \overline{b}_j\right]\right) \smallsetminus (\beta-\varepsilon,\beta]\right] = \gamma_k,\]
    and 
    \[\mathbb{E}\left[\omega \, \bigg| \, \omega \in [\beta-\varepsilon,\beta] \cup \left[\underline{b}_j, \overline{b}_j-h(\epsilon)\right] \right] \ge \gamma_j.
    \]
    Now define a new partition $\widetilde{\mathcal{B}}$ with generic element $\widetilde{B}_i$ by $\widetilde{B}_i = B_i$ if $i \ne j,k$, $\widetilde{B}_j = [\beta-\varepsilon,\beta] \cup \left[\underline{b}_j, \overline{b}_j-h(\epsilon)\right]$, and $\widetilde{B}_k = \left(B_k \cup \left[\overline{b}_j - h(\varepsilon), \overline{b}_j\right]\right) \smallsetminus (\beta-\varepsilon,\beta]$. $\widetilde{\mathcal{B}}$ is obedient by construction, and it is also revelation-proof because $\mathcal{B}$ is; then by \autoref{lemma:eqm-partition}, $\widetilde{\mathcal{B}}$ is an equilibrium partition. Furthermore, it must be that $\mu_F(\widetilde{B}_k) > \mu_F(B_k)$: this is because by construction, $\mathbb{E}\left[\omega \, | \, \omega \in B_k \right] = \mathbb{E}\left[\omega \, | \, \omega \in \widetilde{B}_k \right] = \gamma_k$ and $\int_{\widetilde{B}_k} \omega \ddd F(\omega) > \int_{B_k} \omega \ddd F(\omega)$. Consequently, the sender's payoff is strictly higher in this new equilibrium since $j < k$, which contradicts the assumption that $\mathcal{B}$ is a partition associated with a sender-preferred equilibrium. 
\end{proof}

\begin{claim} \label{claim:below-above}
Let $\mathcal{B}$ be a barely obedient laminar partition, and let $i, j$ be two unskipped actions with $j < i$. Then if $\mu_F(\conv{(B_i)} \cap \conv{(B_j)}) = 0$, then every $\omega' \in B_j$ and $\omega'' \in B_{i}$ satisfy $\omega' \le \omega''$. 
\end{claim}

\begin{proof}[Proof of \cref{claim:below-above}]

    Because $\mu_F(\conv{(B_i)} \cap \conv{(B_j)}) = 0$, either $\omega' \le \omega''$ for every $\omega' \in B_j$ and $\omega'' \in B_{i}$, or $\omega' \ge \omega''$ for every $\omega' \in B_j$ and $\omega'' \in B_{i}$. The second possibility, however, cannot be true: by obedience, $j < i$ implies that $\underline{b}_j \le \gamma_{i}$, and $\mathbb{E}[\omega \sep \omega \in B_{i}] \ge \gamma_{i}$. Consequently, $\overline{b}_{i} > \gamma_{i} \ge \underline{b}_j$, a contradiction. Therefore, it must be that $\omega' \le \omega''$ for every $\omega' \in B_j$ and $\omega'' \in B_{i}$. 
\end{proof}

We are now ready to prove \autoref{thm:sender-pref-eq-characterization}.

\begin{proof}[Proof of \autoref{thm:sender-pref-eq-characterization}]
We first show that $\mathcal{B}$ must be barely obedient. Suppose to the contrary that $\mathcal{B}$, a laminar partition associated with a sender-preferred equilibrium, has $\mathbb{E}[\omega \mid \omega \in B_k] > \gamma_k$ for some non-null $k > j$. By \autoref{lemma:eqm-partition}, $\mathcal{B}$ is both obedient and revelation-proof. Let $\underline{b}_j \coloneqq \min B_j$; because $f > 0$, there exists $x \in B_j$ with $x > \underline{b}_j$ such that $\mathbb{E}[\omega \mid \omega \in [\underline{b}_j, x] \cup B_k] \ge \gamma_k$, and $\mathbb{E}[\omega \mid \omega \in B_j \smallsetminus [\underline{b}_j, x]] > \gamma_j$. Now consider the partition $\widetilde{\mathcal{B}}$ where $\widetilde{B}_j = B_j \smallsetminus [\underline{b}_j, x]$, $\widetilde{B}_k = [\underline{b}_j, x] \cup B_k$, and $\widetilde{B}_i = B_i$ for $i \ne j,k$. $\widetilde{\mathcal{B}}$ is obedient and revelation-proof because $\mathcal{B}$ is, and the sender's ex-ante payoff from $\widetilde{\mathcal{B}}$, $V(\widetilde{\mathcal{B}})$, satisfies
\[V(\widetilde{\mathcal{B}}) - \overline{V} = \left[F(x) - F(\underline{b}_j)\right] (u_k - u_j) > 0,\]
a contradiction.

Next, we show that $B_i$ is the union of at most $\max\{i-1,1\}$ closed intervals for all $i \in N$.
Recall that a laminar partition is defined by \eqref{eqn:laminar-definition}. Because $\conv{(B_k)}$ must be an interval for all $k$, $\cup_{k < i} \conv{(B_k)}$ is the union of at most $i-1$ intervals. By \autoref{claim:lowest-element}, since 
$\mathcal{B}$ is associated with a sender-preferred equilibrium, $\inter{(B_1)} \cap B_k = \varnothing$ for all $k \ge 2$. Thus, for any $i \ge 2$, by taking out $\cup_{k < i} \conv{(B_k)}$ from $\conv{(B_i)}$, at most $i-2$ intervals are removed, and hence the remainder, namely $\conv{(B_i)} \smallsetminus \cup_{k < i} \conv{(B_k)}$, must be the union of at most $\max\{i-1,1\}$ intervals. By taking closure, $B_i$ is also the union of at most $\max\{i-1,1\}$ intervals. 

Finally, we prove the equivalence between revelation proofness of a barely obedient laminar partition $\mathcal{B}$ and the condition that $\max B_i \leq \gamma_{i+1}$ for all unskipped $i$ such that $i+1$ either nests $i$ or is skipped. For convenience, for any $B_i \in \mathcal{B}$, denote $\underline{b}_i \coloneqq \min B_i$ and $\overline{b}_i \coloneqq \max B_i$.\footnote{If $B_i = \varnothing$, we let $\underline{b}_i = \overline{b}_i = 0$.}

Suppose first that $\max B_i \leq \gamma_{i+1}$ for all unskipped $i$ such that $i+1$ either nests $i$ or is skipped, and we show that for any action $i \in N$, $B_i \subseteq [0,\gamma_{i+1}]$. Since $\mathcal{B}$ is laminar, it is without loss of generality to assume that if action $i$ is skipped, $B_i \subseteq [0,\gamma_{i+1}]$. 
    
    Now suppose $i$ is not skipped. There are two possibilities: either action $i+1$ is skipped or not. If $i+1$ is skipped, by assumption $\overline{b}_i \le \gamma_{i+1}$, implying $B_i \subseteq [0,\gamma_{i+1}]$. If instead $i+1$ is not skipped, there are two cases: either $B_i \subseteq \conv{(B_{i+1})}$ or not. If $B_i \subseteq \conv{(B_{i+1})}$, then by assumption $\overline{b}_{i} \le \gamma_{i+1}$, implying that $B_i \subseteq [0,\gamma_{i+1}]$. 

    Suppose instead that $B_i$ is not a subset of $\conv{(B_{i+1})}$, and suppose to the contrary that $\overline{b}_{i} > \gamma_{i+1}$. Because $\mathcal{B}$ is laminar, since $B_i$ is not a subset of $\conv{(B_{i+1})}$ (which means that $\conv{(B_{i+1})}$ does not nest $\conv{(B_{i})}$), it must be that $\mu_F(\conv{(B_{i})} \cap \conv{(B_{i+1})}) = 0$. By \cref{claim:below-above}, for every $\omega' \in B_{i}$ and every $\omega'' \in B_{i+1}$, $\omega' \le \omega''$. This implies that $\omega \ge  \overline{b}_{i} > \gamma_{i+1}$ for every $\omega \in B_{i+1}$, violating bare obedience, a contradiction. Thus, $\overline{b}_{i} \le \gamma_{i+1}$, which implies that $B_i \subseteq [0,\gamma_{i+1}]$. 

    For the other direction, we prove the contrapositive. If the condition is violated, that is, if $\max B_i > \gamma_{i+1}$ for some unskipped $i$ such that $i+1$ either nests $i$ or is skipped, then $B_i$ is not a subset of $[0,\gamma_{i+1}]$. Thus, $\mathcal{B}$ must violate revelation proofness.
\end{proof}

\subsection{Proofs and Details for \autoref{s:imp}}
We first introduce the notion of a bi-pooling distribution.

\begin{definition}[Bi-pooling distribution]
	A distribution $G \in \text{MPC}(F)$ is a \ul{bi-pooling distribution} if there exists a collection of pairwise disjoint intervals $\left\{\left(\underline{\omega}_{i}, \overline{\omega}_{i}\right)\right\}_{i \in \mathcal{I}}$ such that
	\begin{itemize}[noitemsep,topsep=1.5pt]
	    \item for all $i \in \mathcal{I}$, $G\left(\overline{\omega}_{i}\right) - G\left(\underline{\omega}_{i}\right) = F\left(\overline{\omega}_{i}\right) - F\left(\underline{\omega}_{i}\right)$ and $\left|\supp \left(G |_{\left(\underline{\omega}_{i}, \overline{\omega}_{i}\right)}\right)\right| \le 2$;\footnote{For any cumulative distribution function $H$, let $H|_{[c,d]}$ denote the restriction of $H$ to $[c,d] \subseteq [0,1]$.}
	    \item $\left.G\right|_{[0,1] \smallsetminus \cup_{i \in \mathcal{I}}\left(\underline{\omega}_{i}, \overline{\omega}_{i}\right)}=\left.F\right|_{[0,1] \smallsetminus \cup_{i \in \mathcal{I}}\left(\underline{\omega}_{i}, \overline{\omega}_{i}\right)}$.
	\end{itemize}
 In particular, $\left(\underline{\omega}_{i}, \overline{\omega}_{i}\right)$ is called a \ul{pooling interval} if $\left|\supp \left(G |_{\left(\underline{\omega}_{i}, \overline{\omega}_{i}\right)}\right)\right| = 1$, and it is called a \ul{bi-pooling interval} if $\left|\supp \left(G |_{\left(\underline{\omega}_{i}, \overline{\omega}_{i}\right)}\right)\right| = 2$. 

We call a bi-pooling distribution $G_B$ that solves the commitment problem \eqref{id_problem} a \ul{bi-pooling solution}.
\end{definition}

The following observation is useful. 

\begin{lemma}[\citeauthor{candogan2019optimality}, \citeyear{candogan2019optimality}] \label{lemma:bipool_chara}
    Every bi-pooling solution to the commitment problem satisfies $\supp{(G_B)} \subseteq \left[0,\widetilde{\gamma}\right] \cup \left\{\gamma_{i}\right\}_{i=2}^{n}$, where $\widetilde{\gamma} \in [0,1]$ is such that $\widetilde{\gamma} \le \gamma$ for all $\gamma \in \supp{(G_B)} \cap \left\{\gamma_{i}\right\}_{i=2}^{n}$.
\end{lemma}

An important consequence of \cref{lemma:bipool_chara} is that every signal realization can be identified by the action it induces. In particular, every $x \in \supp{(G_B)} \cap [0, \widetilde{\gamma}]$ induces the lowest unskipped action, and $\gamma_i$ induces action $i$ for each $i=2,\ldots,n$. Moreover, if a bi-pooling solution is associated with a bi-pooling partition, then \cref{lemma:bipool_chara} implies that the partition must be barely obedient. 

\subsubsection{Proof of \autoref{thm:id-sol}}
    By results in  \cite{kms} and \cite{absy}, the commitment problem \eqref{id_problem} admits a bi-pooling solution. 

    By \cref{lemma:bipool_chara}, in any bi-pooling solution $G_B$, if $\supp(G_B) \smallsetminus \left\{\gamma_{i}\right\}_{i=2}^{n}$ is nonempty but not a singleton, there must exist an interval $[0, d]$ on which action $1$ is recommended, and $[d,1]$ is comprised of pooling intervals and/or bi-pooling intervals; in this case, $[0,d]$ can be viewed as a pooling interval. Otherwise, $[0,1]$ comprises pooling and/or bi-pooling intervals. The fact that every bi-pooling solution $G_B$ is associated with a bi-pooling partition therefore follows from Lemma 4 in \citet{absy}: on every pooling interval, a single signal realizes, which induces a single action and hence ties to a single partitional element. On every bi-pooling interval $(\underline{\omega}, \overline{\omega})$, there exists $(\underline{q}, \overline{q}) \subseteq (\underline{\omega}, \overline{\omega})$ such that one (deterministic) signal realizes when $\omega \in (\underline{q}, \overline{q})$, and another signal realizes when $\omega \in (\underline{\omega}, \underline{q}) \cup  (\overline{q}, \overline{\omega})$, with the latter signal recommending a higher action. Then $[\underline{q}, \overline{q}]$ and $[\underline{\omega}, \underline{q}] \cup  [\overline{q}, \overline{\omega}]$ correspond to two partitional elements $B_\ell$ and $B_h$ with $\ell < h$, respectively. This observation and \cref{lemma:bipool_chara} together imply that the bi-pooling partition is barely obedient. 

    Fix a bi-pooling solution $G_B$. To see that there is a unique barely obedient bi-pooling partition associated with it, we first note that every pooling interval necessarily ties to a unique partitional element. Hence, if a bi-pooling solution admits two distinct bi-pooling partitions, $\mathcal{B}^1$ and $\mathcal{B}^2$, they must differ on a bi-pooling interval $(\underline{\omega},\overline{\omega})$. Then there must exist $\{B_\ell, B_h\}$ and $\{B'_\ell, B'_h\}$ such that $B_\ell = [\underline{b}_\ell, \overline{b}_\ell]$, $B_h = [\underline{\omega}, \underline{b}_\ell] \cup[\overline{b}_\ell, \overline{\omega}]$, $B'_\ell = [\underline{b}'_\ell, \overline{b}'_\ell]$, and $B'_h = [\underline{\omega}, \underline{b}'_\ell] \cup[\overline{b}'_\ell, \overline{\omega}]$, with either $\underline{b}'_\ell \ne \underline{b}_\ell$, or $\overline{b}'_\ell \ne \overline{b}_\ell$, or both. Since $f>0$ and $\underline{\omega}$ and $\overline{\omega}$ are fixed, either $\mathbb{E}[\omega \sep \omega \in B_\ell] \ne \gamma_\ell$, or $\mathbb{E}[\omega \sep \omega \in B'_\ell] \ne \gamma_\ell$, or both, a contradiction. 

    Finally, we show that the commitment solution is unique in generic communication environments. 
    For any bi-pooling solution $G_B$, let $\mu_{i} := G_B(\gamma_{i+1}) - G_B(\gamma_{i})$ for each $i=2, \ldots, n$. If $\supp{(G_B)} \smallsetminus \left\{\gamma_{i}\right\}_{i=1}^{n} = \varnothing$, set $\mu_1 = G_B(\gamma_{2})$ and $\widetilde{\mu} = 0$; otherwise, let $\widetilde{\mu} = G_B(\widetilde{\gamma})$ and $\mu_1 = G_B(\gamma_{2}) - \widetilde{\mu}$; \cref{lemma:bipool_chara} then indicates that every bi-pooling solution is identified by $\{\widetilde{\mu}\} \cup \{\mu_i\}_{i=1}^{n}$. By Lemma D.1 in \cite{candogan2022persuasion}, the condition that no three elements of the collection of points $\{(\gamma_i, u_i)\}_{i=1}^{n}$ are collinear implies that any two bi-pooling solutions of problem \eqref{id_problem} induce the same collection of $\{\widetilde{\mu}\} \cup \{\mu_i\}_{i=2}^{n}$. Then, because every bi-pooling solution is a mean-preserving spread of $F$, any two such bi-pooling solutions must be identical. As noted in \cite{kms} and \cite{absy}, every extreme point of the set of solutions to problem \eqref{id_problem} is a bi-pooling solution, and hence the solution to problem \eqref{id_problem} must be a unique bi-pooling solution.

\subsubsection{Proof of \autoref{prop:bipooling-implementable-iff}}
Since the bi-pooling partition is barely obedient, the ``if'' direction follows from \cref{lemma:eqm-partition}. For the other direction, suppose first that a bi-pooling solution $G_B$ is implementable. Since the commitment payoff is an upper bound on equilibrium payoffs, $G_B$ is the PMD induced by a sender-preferred equilibrium $(\sigma, \tau, \beliefm)$. By \autoref{thm:id-sol}, $G_B$ is associated with a unique barely obedient bi-pooling partition $\mathcal{B}$; an argument analogous to the one in the proof of \autoref{lemma:sender-preferred-laminar} shows that $\mathcal{B}$ is revelation-proof.

\subsubsection{Proof of \autoref{prop:sufficient}}
We use the following result by \cite{absy} to prove \cref{prop:sufficient}. Say that $\{\gamma_\ell,\gamma_h\} $ is a \ul{feasible} bi-pooling support for the interval $(\underline{\omega},\overline{\omega})$ (or just \ul{feasible} for simplicity) if there exists a mean preserving contraction of $F|_{[\underline{\omega},\overline{\omega}]}$ whose support is $\{\gamma_\ell,\gamma_h\}$.

\begin{lemma}[\citeauthor{absy}, \citeyear{absy}] \label{bpf}
	Fix an interval $(\underline{\omega},\overline{\omega})$, and let $y(\gamma_\ell)$ satisfy $\mathbb{E}[\omega \mid \omega \in [\underline{\omega}, y(\gamma_\ell))] = \gamma_\ell$. Then $\{\gamma_\ell,\gamma_h\}$ is feasible for the interval $(\underline{\omega},\overline{\omega})$ if and only if
	\begin{itemize}[itemsep=0.1pt, topsep=2.5pt]
		\item[(i)] $\underline{\omega} \le \gamma_\ell \le \mathbb{E}[\omega \mid \omega \in [\underline{\omega}, \overline{\omega}]] \le \gamma_h \le \overline{\omega}$, and
		\item[(ii)] $\mathbb{E}[\omega \mid \omega \in [y(\gamma_\ell), \overline{\omega}]] \ge \gamma_h$.
	\end{itemize}
\end{lemma}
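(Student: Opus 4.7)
The plan is to prove Proposition~\ref{gsc} by contrapositive. Suppose~(\ref{nam}) holds, yet some bi-pooling solution $G_B$ is not implementable. By Proposition~\ref{cee}, the canonical representation $\{B_i\}_{i=0}^{n-1}$ of $G_B$ either (a) has some $a_k$ skipped with $\overline{b}_j > \gamma_k$ for an unskipped $a_j$, $j<k$, or (b) admits a nested intervals representation $\{B_L,B_H\}$ with $\overline{b}_L > \gamma_H$. I will construct, in each configuration, a local mean-preserving perturbation of $G_B$ whose value under $u$ is strictly larger, contradicting the optimality of $G_B$ for problem~(\ref{pmc}).

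As a preliminary step, I would rewrite (\ref{nam}) as the conjunction of the two slope inequalities (\ref{na1}) and (\ref{na2}). Interpreted via the piecewise-linear interpolation of $\{(\gamma_i,v_i)\}_{i=0}^{n-1}$ shown in the left panel of Figure~\ref{sde}, (\ref{na1}) says this interpolation has strictly increasing slopes, so that splitting a unit of posterior mean $\gamma_i$ into a mean-preserving mixture over $\gamma_{i-1}$ and $\gamma_{i+1}$ strictly raises the sender's value; (\ref{na2}) says the same is true if the lower support point is instead $h(\gamma_i;\gamma_{i+1})$.

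The core of the proof is to carry out such a split locally inside the offending $B_j$. Let $\gamma_H$ denote the higher cutoff ($\gamma_k$ in case (a), $\gamma_H$ in case (b)). There are two subcases: either some cutoff $\gamma_L$ with $L<j$ already lies in the $F$-support of the pooling at $\gamma_j$, or it does not. In the first subcase, by (\ref{na1}) the split of a small mass $\Delta>0$ from the pool at $\gamma_j$ into weights $\lambda$ and $1-\lambda$ at $\gamma_L$ and $\gamma_H$ (with $\lambda$ chosen so that $\lambda\gamma_L+(1-\lambda)\gamma_H=\gamma_j$) is strictly profitable: the payoff change equals $\Delta\cdot[\lambda v_L+(1-\lambda)v_H-v_j]>0$. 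In the second subcase, $\gamma_j<\underline{b}_j$ forces $h(\gamma_j;\gamma_{j+1})\in B_j$ since $\gamma_{j+1}\le\gamma_H$ also lies in $B_j$; replace $\gamma_L$ by $h(\gamma_j;\gamma_{j+1})$ and invoke (\ref{na2}) in place of (\ref{na1}), yielding the same strictly positive payoff change. Feasibility of the new split on the affected subinterval is exactly the content of Lemma~\ref{bpf}: by taking $\Delta$ sufficiently small, the two mean-preserving requirements there are met for the sub-window neighboring the relevant cutoffs, and the distribution outside this sub-window is unchanged, so the integrated MPC constraints continue to hold globally.

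The main obstacle I anticipate is case~(b), where the offending $B_j=B_L$ is one component of a nested intervals representation rather than a standalone pooling interval. There the perturbation must be arranged so as not to disturb the pooling at the nested partner $B_H$ (which has conditional mean $\gamma_H$). I would handle this by localizing the mass transfer inside a subset of $B_L$ bounded away from $B_L\cap\partial(\conv B_H)$, so that after the perturbation the inner interval $B_L$ is slightly shrunk and a new tiny pooling interval appears above $\gamma_H$; Lemma~\ref{bpf} applied twice — once to the residual $B_L$ and once to the new split — certifies feasibility, and inequality (\ref{na1}) or (\ref{na2}) delivers the strict payoff improvement needed for the contradiction.
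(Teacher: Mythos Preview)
Your proposal does not address the stated lemma at all: Lemma~\ref{bpf} is the feasibility characterization for a bi-pooling support $\{\gamma_L,\gamma_H\}$ on an interval $[\underline{\omega},\overline{\omega}]$, attributed to \citeauthor{absy}. What you have written is a proof sketch of Proposition~\ref{gsc}, which \emph{uses} Lemma~\ref{bpf} as a black box to certify feasibility of the perturbed signal distributions. The paper itself does not prove Lemma~\ref{bpf}; it simply records it ``for future reference'' and appeals to \citet{absy}. So there is nothing to compare here --- you have targeted the wrong statement.

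As an aside, your sketch of Proposition~\ref{gsc} is in the same spirit as the paper's actual proof (contrapositive, case split on whether a lower cutoff lies inside the offending pooling set, invocation of~(\ref{na1}) versus~(\ref{na2}), and Lemma~\ref{bpf} for feasibility). But several details are off. First, in your ``second subcase'' you write ``$\gamma_j<\underline{b}_j$ forces $h(\gamma_j;\gamma_{j+1})\in B_j$''; the relevant inequality is $\gamma_{j-1}<\underline{b}_j$, not $\gamma_j<\underline{b}_j$ (the latter cannot hold since $\mathbb{E}[\omega\mid\omega\in B_j]=\gamma_j$). Second, the paper does not merely assert that a small mass shift is profitable; it parametrizes the perturbation by a scalar ($z$ or $\varepsilon$), computes $P'$ explicitly via the implicit function theorem applied to $h(z)$ and $m(z)$, and shows the derivative has the correct sign at the endpoint --- your appeal to a discrete $\Delta$-shift glosses over why the resulting distribution remains in $\mathrm{MPC}(F)$. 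Third, in case~(b) the paper observes that $B_L$ is itself an interval in the canonical representation, so the argument reduces directly to the pooling-interval case; your plan to ``localize away from $B_L\cap\partial(\conv B_H)$'' and apply Lemma~\ref{bpf} twice is unnecessary.
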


\begin{proof}[Proof of \cref{prop:sufficient}]
    Suppose to the contrary that there exists a bi-pooling solution $G_B$ that is not implementable. Then by \cref{cor:bip_imp_2}, the bi-pooling partition of $G_B$, $\mathcal{B}$, must violate the condition therein. First suppose that there exists an unskipped action $i$ such that $\overline{b}_i > \gamma_{i+1}$. There are two cases.

\paragraph{Case 1.} $B_i$ is an interval; i.e., $B_i = [\underline{b}_i, \overline{b}_i]$. There are two subcases:
\begin{itemize}[noitemsep, topsep=1pt]
	\item[(I)] $\gamma_{i-1} \in  \inter(B_i)$. 
	
	For $z \in (\gamma_{i+1}, \overline{b}_i)$, let $h(z)$ solve $\mathbb{E}[\omega \mid \omega \in [\underline{b}_i, h(z)] \cup [z, \overline{b}_i]] = \gamma_{i+1}$.
By \cref{bpf}, for $z$ close enough to $\overline{b}_i$, $\{\gamma_{i-1}, \gamma_{i}\}$ is feasible for $[h(z), z]$. 
Consequently, the sender's payoff on $B_i$, as a function of $z$, is
\begin{align*}
	P(z) \coloneqq~ & u_{i+1}[F(\overline{b}_{i})-F(z)+F(h(z))-F(\underline{b}_{i})] + \\
    & \left[\frac{m(z)-\gamma_{i-1}}{\gamma_{i}-\gamma_{i-1}} u_{i}+\frac{\gamma_{i}-m(z)}{\gamma_{i}-\gamma_{i-1}} u_{i-1}\right](F(z)-F(h(z)))
\end{align*}
where $m(z):=\mathbb{E}[\omega \mid \omega \in[z, h(z)]]$. To show that this is a profitable deviation, it suffices to show that $P'(\overline{b}_i) := \lim_{z \nearrow \overline{b}_i} P'(z) < 0$. To this end, we first calculate 
\begin{align*}
    P'(z) 
    & = \frac{f(z)}{\gamma_{i}-\gamma_{i-1}} \frac{z - h(z)}{\gamma_{i+1} - h(z)} \left[(u_{i} - u_{i-1}) \gamma_{i+1} + (u_{i+1} - u_{i}) \gamma_{i-1} - (u_{i+1} - u_{i-1}) \gamma_i\right]. 
\end{align*}
Letting $z \nearrow \overline{b}_i$, then $h(z) \searrow \underline{b}_i$, and
\begin{align*}
	P'(\overline{b}_i) & = \frac{f(\overline{b}_i)}{\gamma_{i}-\gamma_{i-1}} \frac{\overline{b}_i - \underline{b}_i}{\gamma_{i+1} - \underline{b}_i} \left[ (u_{i} - u_{i-1}) \gamma_{i+1} + (u_{i+1} - u_{i}) \gamma_{i-1} - (u_{i+1} - u_{i-1}) \gamma_i\right];
\end{align*}
$P'(\overline{b}_j) < 0$ if and only if
$
(u_{i+1} - u_{i-1}) \gamma_i - (u_{i} - u_{i-1}) \gamma_{i+1} - (u_{i+1} - u_{i}) \gamma_{i-1} > 0,
$
and this is equivalent to 
\[
\frac{u_{i+1} - u_{i}}{\gamma_{i+1} - \gamma_i} > \frac{u_{i} - u_{i-1}}{\gamma_{i} - \gamma_{i-1}},
\]
which is implied by \eqref{eq:convexity}.
	
	\item[(II)] $\gamma_{i-1} \notin  \inter(B_i)$ (and therefore $\gamma_{i - 1} \le \underline{b}_i$). 
    
    Define $B_i^\varepsilon = [\underline{b}_i+\varepsilon, \overline{b}_i]$; for small enough $\varepsilon > 0$, $\{\gamma_i, \gamma_{i+1}\}$ is feasible for $B^\varepsilon_i$. Let $m(\varepsilon)$ denote the mean of $B^\varepsilon_i$; note that $m(0) = \gamma_i$. Then the sender's payoff as a function of $\varepsilon$ on $B_i$ is 
	\begin{align*}
		P(\varepsilon) & \coloneqq \left[\frac{m(\varepsilon)-\gamma_{i}}{\gamma_{i+1}-\gamma_{i}} u_{i+1}+\frac{\gamma_{i+1}-m(\varepsilon)}{\gamma_{i+1}-\gamma_{i}} u_{i}\right][F(\overline{b}_i)-F(\underline{b}_i+\varepsilon)] + u_{i-1} [F(\underline{b}_i+\varepsilon) - F(\underline{b}_i)].
	\end{align*}
	To show that this is a profitable deviation, it suffices to show that $P'(0) := \lim_{\varepsilon \searrow 0} P'(\varepsilon) > 0$. Algebra reveals that
	\begin{align}
		P'(\varepsilon) & = f(\underline{b}_i+\varepsilon) \left[u_{i-1} - \frac{\gamma_{i+1} u_{i} - \gamma_i u_{i+1}}{\gamma_{i+1}-\gamma_{i}} -(\underline{b}_i-\varepsilon)\frac{u_{i+1}-u_{i}}{\gamma_{i+1}-\gamma_{i}}\right]; \nonumber
	\end{align}
	consequently, as $\varepsilon \searrow 0$,
	\begin{align}
		P'(0)
		& = \frac{f(\underline{b}_i)}{\gamma_{i+1}-\gamma_{i}} \left[(\gamma_{i}-\underline{b}_i)(u_{i+1} - u_{i}) - (\gamma_{i+1}-\gamma_{i})(u_{i} - u_{i-1})\right]. \label{dgb}
	\end{align}
	Because $f(\underline{b}_i) > 0$ and $\gamma_{i+1}-\gamma_{i} > 0$ by assumption, the sign of $P'(0)$ is the same as the sign of term in the square brackets in the right-hand side of \eqref{dgb}. Thus, $P'(0) > 0$ if and only if 
	\[
	\frac{u_{i+1} - u_{i}}{\gamma_{i+1} - \gamma_i} > \frac{u_{i} - u_{i-1}}{\gamma_i - \underline{b}_i}.
	\] 
	Because $\gamma_{i+1} < \overline{b}_i$, $h(\gamma_i;\gamma_{i+1}) > \underline{b}_i$. Thus, \eqref{eq:convexity} implies the inequality above.
\end{itemize} 

\paragraph{Case 2.} There is a partitional element $B_j$ with $j<i$ such that $B_j \subseteq \conv{(B_i)}$. 
In this case, $B_j = [\underline{b}_j, \overline{b}_j]$, and $B_i = [\underline{b}_i, \underline{b}_j] \cup [\overline{b}_j, \overline{b}_i]$. WLOG, assume that $\underline{b}_i < \underline{b}_j$; by \cref{bpf}, $\{\gamma_j, \gamma_i\}$ is feasible for $[\underline{b}_i, \overline{b}_i]$, and
\begin{equation} \label{ste}
    \mathbb{E}[\omega \mid \omega \in [y(\gamma_j), \overline{b}_i]] > \gamma_i.
\end{equation}
For $z \in (\max\{\gamma_{i+1}, \overline{b}_j\}, \overline{b}_i)$, let $h(z)$ solve $\mathbb{E}\big[\omega \mid \omega \in [\underline{b}_i, h(z)] \cup [z, \overline{b}_i]\big] = \gamma_{i+1}$.
By \eqref{ste} and \cref{bpf}, for $z$ close enough to $\overline{b}_i$, $\{\gamma_j, \gamma_i\}$ is feasible for $[h(z), z]$. Consequently, one can find a profitable deviation similar to in Case 1 (I) \emph{mutatis mutandis}.

\bigskip

Now suppose instead that there exists a pair of partitional elements $B_i$ and $B_{i+1}$ with $B_i \subseteq \conv{(B_{i+1})}$, and $\gamma_{i+1} < \overline{b}_i$. Since $B_i = [\underline{b}_i, \overline{b}_i]$, this case is isomorphic to Case 1 when action $i+1$ is skipped. Hence, we can similarly find a profitable deviation.

Therefore, every bi-pooling solution must satisfy the condition in \cref{cor:bip_imp_2}, and hence implementable. This completes the proof. 
\end{proof}

\subsubsection{Proof of \autoref{cor:inc-dens}}
By definition of $h(\gamma_i; \gamma_{i+1})$, when $f$ is increasing, $\gamma_{i+1} - \gamma_{i} \le \gamma_{i}-h\left(\gamma_{i} ; \gamma_{i+1}\right)$. Then since $\gamma_{i+1} - \gamma_{i} \le \gamma_{i} - \gamma_{i-1}$, it must be that $\gamma_{i+1}-\gamma_{i} \le \min \left\{\gamma_{i}-\gamma_{i-1}, \gamma_{i}-h\left(\gamma_{i} ; \gamma_{i-1}\right)\right\}$. 
Now there are two cases. If $u_{i+1} - u_{i} > u_{i} - u_{i-1}$, \eqref{eq:convexity} must hold; then by \cref{prop:sufficient}, every bi-pooling solution can be implemented. If instead $\gamma_{i+1} - \gamma_{i} < \gamma_{i} - \gamma_{i-1}$, it must be that $\gamma_{i+1}-\gamma_{i} < \min \left\{\gamma_{i}-\gamma_{i-1}, \gamma_{i}-h\left(\gamma_{i} ; \gamma_{i-1}\right)\right\}$; this inequality and $u_{i+1} - u_{i} \ge u_{i} - u_{i-1}$ together imply \eqref{eq:convexity}. Again by \cref{prop:sufficient},  every bi-pooling solution can be implemented.

\section{Supplementary Appendix}
This appendix is organized as follows:
\begin{itemize}[noitemsep]
    \item \autoref{appendix:supplementary-ternary} proves supplementary results for the special case in which the receiver has three actions.
    \item \autoref{appendix:remaining-proofs} contains the proofs omitted from Appendix \ref{section:appendix}.
    \item \autoref{refinement} establishes the robustness of the set of partitional equilibria to two standard refinements: the never-a-weak-best-response (NWBR) criterion \citep{chokreps1987} and the Grossman--Perry--Farrell refinement \citep{bertomeu2018verifiable}.
\end{itemize}

\subsection{Supplementary Results for Ternary Actions} \label{appendix:supplementary-ternary}
In this section, we study the special case where the receiver has three actions: $N = \{1, 2, 3\}$. Recall that $\{\gamma_L,\gamma_H\}$ is feasible for the interval $[\underline{\omega},\overline{\omega}]$ if there exists a mean preserving contraction of $F|_{[\underline{\omega},\overline{\omega}]}$ whose support is $\{\gamma_L,\gamma_H\}$. 

\begin{claim} \label{imc}
    If there does not exist $y \in [0,\gamma_2]$ such that $\{\gamma_2, \gamma_3\}$ is feasible for $[y,1]$, then every bi-pooling solution is implementable.
\end{claim}

\begin{proof}[Proof of \autoref{imc}]
    By Lemma 6, $\{\gamma_2,\gamma_3\}$ is feasible for the interval $[y,1]$ if and only if
	\begin{itemize}
		\item[(i)] $y \le \gamma_2 \le m(y) \le \gamma_3 \le 1$, and
		\item[(ii)] $\mathbb{E}[\omega \, | \, \omega \in [\eta(\gamma_2; y), 1]] \ge \gamma_3$,
	\end{itemize}
	where $m(y) \coloneqq \mathbb{E}[\omega \, | \, \omega \in [y, 1]]$, and $\eta(\gamma_2; y)$ is such that $\mathbb{E}[\omega \, | \, \omega \in [y, \eta(\gamma_2)] = \gamma_2$. Then if there does not exist $y \in [0,\gamma_2]$ such that $\{\gamma_2, \gamma_3\}$ is feasible for $[y,1]$, there are two cases:
	\begin{itemize}
	    \item[(a)] (i) fails to hold for all $y \in [0,\gamma_2]$;
	    \item[(b)] (i) holds for some $y \in [0,\gamma_2]$, but (ii) fails for all such $y$'s.
	\end{itemize}
	For Case (a), the only possibility is that $m(y) \ge \gamma_3$. If this is the case, every bi-pooling solution $G$ has $\supp{(G)} \subseteq [\gamma_3, 1]$, and hence the unique bi-pooling partition has $B_3 = [0,1]$, and revelation proofness holds. 
	
	For Case (b), let us introduce some notation first. For each $i=1,2$, if there exists $h \ge 0$ such that $\mathbb{E}[\omega \, | \, \omega \in [h,1]] = \gamma_i$, set $h(\gamma_i) = h$; otherwise let $h(\gamma_i) = 0$. Then for every $y \in [0, \gamma_2] \cap [h(\gamma_2), h(\gamma_3)]$, (i) holds. If (ii) fails for all such $y$'s, it must be that $\mathbb{E}[\omega \, | \, \omega \in [\eta(\gamma_2; h(\gamma_2)), 1]] < \gamma_3$. Note that this is not possible if $h(\gamma_2) > 0$: if this is the case, $\eta(\gamma_2; h(\gamma_2)) = 1$ by definition, so it must be that $\mathbb{E}[\omega \, | \, \omega \in [\eta(\gamma_2; h(\gamma_2)), 1]] \ge \gamma_3$, a contradiction. Consequently, in Case (b), (ii) must fail for $y = 0$, and hence every bi-pooling solution $G$ has $G(\gamma_3) - G(\gamma_3^-) = 1 - F(h(\gamma_3))$, and $\supp{(G)} \subseteq [\gamma_2,\gamma_3]$. Thus, the unique bi-pooling partition associated with $G$ has $B_2 = [0, h(\gamma_3)]$ and $B_3 = [h(\gamma_3), 1]$. Since $h(\gamma_3) < \gamma_3$ by definition, revelation proofness must hold. This completes the proof.
\end{proof}

\begin{claim} \label{euc}
    All bi-pooling solutions to the information design problem are associated with the same bi-pooling partition. 
\end{claim}

\begin{proof}[Proof of \autoref{euc}]
    It can be readily seen from the proof of \autoref{imc} that, if there does not exist $y \in [0,\gamma_2]$ such that $\{\gamma_2, \gamma_3\}$ is feasible for $[y,1]$, then all bi-pooling solutions are associated with the same bi-pooling partition. Now suppose there exists $y \in [0,\gamma_2]$ such that $\{\gamma_2, \gamma_3\}$ is feasible for $[y,1]$. Let $Y$ denote the set of such $y$'s; Lemma 6 implies that $Y$ is a closed subset of $[0,\gamma_2]$. Consequently, the commitment payoff can be identified by the lower endpoint of the bi-pooling interval. Hence, each of them corresponds to a point in $Y$ that maximizes the sender's ex-ante payoff (recall that $u_1 = 0$):
    \[
    \pi(y) = (1-F(y))\left[\frac{\gamma_3-m(y)}{\gamma_3-\gamma_2} u_2+\frac{m(y)-\gamma_2}{\gamma_3-\gamma_2} u_3\right].
    \]
    Taking derivative, 
    \[
    \pi^{\prime}(y)=-f(y)\left[\frac{\gamma_3-m(y)}{\gamma_3-r_{1}} u_2+\frac{m(y)-\gamma_2}{\gamma_3-\gamma_2} u_3\right]+(1-F(y)) \frac{u_3-u_2}{\gamma_3-u_2} m^{\prime}(y),
    \]
    where
    \[
    m^{\prime}(y)=\frac{(m(y)-y) f(y)}{1-F(y)}.
    \]
    Consequently,
    \begin{align*}
        \pi'(y) & = \frac{f(y)}{\gamma_3 - \gamma_2} \left[(m(y)-y)\left(u_3-u_2\right)-\left(\gamma_3-m(y)\right) u_2-\left(m(y)-\gamma_2\right) u_3\right], \\
        & = \frac{f(y)}{\gamma_3 - \gamma_2} \left[u_3\left(\gamma_2-y\right)-u_2\left(\gamma_3-y\right)\right]
    \end{align*}
    and its sign is determined by the terms between the squared brackets, which implies that $\pi$ is single-peaked in $y$. Therefore, there must exist a unique $z$ that maximizes $\pi(y)$ on $Y$. As a consequence, all bi-pooling solutions are associated with the same bi-pooling partition $\mathcal{B}$ with $B_1 = [0,z]$, $B_2 = [\underline{b}_2, \bar{b}_2]$, and $B_3 = [z, \underline{b}_2] \cup [\bar{b}_2,1]$. 
\end{proof}

\begin{claim} \label{ntm}
    If no commitment solution is implementable, the sender-preferred laminar PE $\mathcal{B}$ is such that $B_1 = [0, y]$, $B_2= [h,b]$, and $B_3 = [y,h] \cup [b,1]$, where $\gamma_2 < b \le \gamma_3$, and $h > 0$ and $y \ge 0$ are defined by 
    \begin{align} \label{ime}
        \mathbb{E}\left[\omega \mid \omega \in\left[h, b\right]\right]=\gamma_2 \quad 
	\text{and} \quad \mathbb{E}\left[\omega \mid \omega \in\left[y, h\right] \cup\left[b, 1\right]\right]=\gamma_3.
    \end{align}
Furthermore, 
\begin{equation}\label{rie}
    u_3 \le \frac{\gamma_3 - y}{\gamma_2 - y} u_2.
\end{equation}
\end{claim}

\begin{proof}[Proof of \autoref{ntm}]
    By Theorem 2, if $B_2$ has nonempty interior, it must be that $\mathbb{E}[\omega \mid \omega \in B_2]  = \gamma_2$, and $\mathbb{E}[\omega \mid \omega \in B_3]  = \gamma_3$.\footnote{$B_3$ must have nonempty interior, or else revelation proofness cannot be satisfied. Also by revelation proofness, $b \le \gamma_3$.} Then to obtain the statement it suffices to show three things: (1) $b > \gamma_2$, namely $B_2$ has nonempty interior; (2) there exists such an $h$; and (3) there exists such a $y$. \par 
    
    We show that $B_2$ has nonempty interior first. Suppose to the contrary that $\inter(B_2) = \varnothing$, then by Claim 7, there are two cases: $B_3 = [0,1]$, and $B_3 = [z, 1]$ for some $z > 0$. If $B_3 = [0,1]$, since the sender attains highest possible payoff in equilibrium, it must be that a commitment outcome is implementable, a contradiction. If instead $B_3 = [z, 1]$ for some $z > 0$, it must be that $z \le \gamma_2$, as otherwise recommending action $2$ on $[z, \gamma_2]$ is a profitable deviation. Consequently, this bi-pooling partition is revelation-proof, and hence a commitment outcome is implementable, again a contradiction. Therefore, it must be that $B_2$ has nonempty interior.
    
    To see that there exists such an $h$, we first claim that $\mathbb{E}[\omega \mid \omega \in [0,\gamma_3]] \le \gamma_2$. Suppose not, so 
    \begin{equation} \label{jbm}
        \mathbb{E}[\omega \mid \omega \in [0,\gamma_3]] > \gamma_2.
    \end{equation}
    Without loss of generality, assume that there exists a bi-pooling solution features $[y,1]$ bi-pooled to $\{\gamma_2, \gamma_3\}$ for some $y \in [0,\gamma_2]$.\footnote{If such $y$ does not exist, by \autoref{imc}, the bi-pooling solution must be implementable, a contradiction.} Consequently, there exist $\bar{b}_1$ and $\underline{b}_1$ with $y \le \underline{b}_1 \le \bar{b}_1$ such that the (unique) bi-pooling partition associated with the bi-pooling solution $\mathcal{B}$ is given by $B_1 = [0,y]$, $B_2 = [\underline{b}_1,\bar{b}_1]$, and $B_3 = [y,\underline{b}_1] \cup [\bar{b}_1,1]$. Then because $\mathbb{E}[\omega \mid \omega \in B_2] = \gamma_2$ and $\underline{b}_1 \ge 0$, we must have $\bar{b}_1 \le \gamma_3$ by (\ref{jbm}). Thus, the bi-pooling solution must be implementable, a contradiction. As a consequence, it must be that $\mathbb{E}\left[\omega \mid \omega \in\left[h, b\right]\right]=\gamma_2$; it remains to show that $h > 0$. If instead $h = 0$, then $\mathbb{E}[\omega \mid \omega \in [0,\gamma_3]] = \gamma_2$. Consequently, it must be that $B_2 = [0,\gamma_3]$ and $B_3 = [\gamma_3,1]$. This cannot be optimal: for any $\varepsilon \in (0, \gamma_2)$, define $\hat{B}_1 = [\varepsilon, \gamma_3]$, and $\hat{B}_2 = [0, \varepsilon] \cup [\gamma_3, 1]$. Then for $\varepsilon$ small enough, $\mathbb{E}[\omega \mid \omega \in \hat{B}_i] \ge \gamma_i$ for each $i = 1,2$, and the sender's ex-ante payoff is strictly higher. Thus, it must be that $h > 0$.
    
    To show that there exists such a $y$, it suffices to show that $\mathbb{E}[\omega \mid \omega \in [0,h] \cup [b,1]] \le \gamma_3$. Suppose not, so $\mathbb{E}[\omega \mid \omega \in [0,h] \cup [b,1]] > \gamma_3$. Let $\delta > 0$ be small enough, and let $\epsilon(\delta)$ be such that
    \[
    \mathbb{E}\left[\omega \mid \omega \in \left[h+\varepsilon(\delta), b-\delta\right]\right] = \gamma_2.
    \]
    Now define $\Tilde{B}_1 = \left[h+\varepsilon(\delta), b-\delta\right]$, and $\Tilde{B}_2 = [0,h+\varepsilon(\delta)] \cup [b-\delta,1]]$. Because the density $f$ is strictly positive, for small enough $\delta$, $\mathbb{E}[\omega \mid \omega \in \Tilde{B}_2] \ge \gamma_3$, and $\mu_F(\Tilde{B}_2) > \mu_F(B_3)$. This creates a profitable deviation to the sender without violating revelation proofness. 
    
    Finally, to show that (\ref{rie}) must hold, suppose to the contrary that 
    \[u_3 > \frac{\gamma_3 - y}{\gamma_2 - y} u_2.\]
    An argument analogous to Case 1 (II) in the proof of Proposition 2 shows that the sender has a profitable deviation, and hence the bi-pooling partition $\mathcal{B}$ with $B_1 = [0, y]$, $B_2= [h,b]$, and $B_3 = [y,h] \cup [b,1]$ cannot be associated with a sender-preferred laminar PE, a contradiction.
\end{proof}

By \autoref{ntm}, the sender's ex-ante payoff in a sender-preferred equilibrium can be written as
\[
\bar{V}(b) = u_2 [F(b)-F(h(b))] + u_3 [1-F(b) + F(h(b)) - F(y(b))],
\]
where $h(b)$ and $y(b)$ are implicitly defined by the two equations in (\ref{ime}). 

\begin{claim} \label{ibm}
    If no commitment outcome is implementable, the sender's ex-ante payoff in a sender-preferred equilibrium, $\bar{V}(b)$, is increasing in $b$.
\end{claim}

\begin{proof}[Proof of \autoref{ibm}]
    Directly,
    \begin{equation} \label{drv}
    \bar{V}'(b) = \left(u_2-u_3\right)\left[f\left(b\right)-f(h) \frac{\mathrm{d} h}{\mathrm{d} b}\right]-u_3 f(y) \frac{\mathrm{d} y}{\mathrm{d} b}. 
    \end{equation}
    Using (\ref{ime}), by the implicit function theorem,
    \begin{align}
    \frac{\mathrm{d} h}{\mathrm{d} b} & = -\frac{\left(b-\gamma_2\right) f\left(b\right)}{\left(\gamma_2-h\right) f(h)}, \label{csa} \\
    \frac{\mathrm{d} y}{\mathrm{d} b} & = -\frac{\left(b-h\right)\left(\gamma_3-\gamma_2\right) f\left(b\right)}{\left(\gamma_2-h\right)\left(\gamma_3-y\right) f(y)}. \label{csb}
\end{align}
Plugging (\ref{csa}) and (\ref{csb}) into (\ref{drv}),
\begin{align*}
    \bar{V}'(b) & = \left(u_2-u_3\right)f(b)\left(1+ \frac{b-\gamma_2}{\gamma_2 - h}\right) + u_3 f(b) \frac{\left(b-h\right)\left(\gamma_3-\gamma_2\right)}{\left(\gamma_2-h\right)\left(\gamma_3-y\right)} \\
    & = \left[\left(u_2-u_3\right) \frac{b-h}{\gamma_2-h} + u_3 \frac{\left(\gamma_3-\gamma_2\right)\left(b-h\right)}{\left(\gamma_2-h\right)\left(\gamma_3-y\right)}\right] f\left(b\right) \\
    & = \left[u_2 \frac{b-h}{\gamma_2-h} - u_3 \frac{(b-h)(\gamma_2-y)}{(\gamma_2 - h)(\gamma_3 - y)} \right] f\left(b\right) \\
    & = \left[u_2 (\gamma_3 - y) - u_3 (\gamma_2 - y)\right] \frac{(b-h) f\left(b\right)}{(\gamma_2 - h)(\gamma_3 - y)},
\end{align*}
and we see that $\bar{V}'(b) \ge 0$ if and only if $u_2 (\gamma_3 - y) \ge u_3 (\gamma_2 - y)$. Then since no commitment solution is implementable, by \autoref{ntm}, (\ref{rie}) implies that $\bar{V}'(b) \ge 0$, and hence the sender's ex-ante payoff in a sender-preferred equilibrium is increasing in $b$. 
\end{proof}

\subsection{Remaining Omitted Proofs} \label{appendix:remaining-proofs}
\subsubsection{Proof of \autoref{claim:ternary-sp}}
    To solve for a sender-preferred equilibrium, we find a bi-pooling solution to the corresponding information design problem first, and check whether it is implementable using Corollary 1. If it is, a sender-preferred equilibrium is associated with a barely obedient bi-pooling partition $\mathcal{B}$ that is also associated with  the commitment solution.\footnote{One may wonder what if the bi-pooling solution found above is not implementable, but there exists another bi-pooling solution that is implementable. This can never happen when there are three actions: by \autoref{imc}, all bi-pooling solutions induce essentially the same bi-pooling partition.}

Now suppose that no commitment solution is implementable. By Theorem 2, there exist $y$, $z$, and $b$ such that $B_2 = [z,b]$, and $B_3 = [y, z] \cup [b,1]$.\footnote{If both $B_2$ and $B_3$ are intervals, just set $y = z$.} By \autoref{ntm}, there exist $y \ge 0$ and $z > 0$ such that
\begin{align*}
    \mathbb{E}\left[\omega \mid \omega \in\left[z, b\right]\right]=\gamma_2 \quad 
	\text{and} \quad \mathbb{E}\left[\omega \mid \omega \in\left[y, z\right] \cup\left[b, 1\right]\right]=\gamma_3.
\end{align*}
Consequently, $y$ and $z$ can be implicitly defined as functions of $b$, and hence the sender's ex-ante payoff can be parametrized by $b$, so long as $b \le \gamma_3$:
\[
\bar{V}(b) = u_2 [F(b)-F(z(b))] + u_3 [1-F(b) + F(z(b)) - F(y(b))].
\]
By \autoref{ibm}, $\bar{V}$ is increasing in $b$. Hence, the partition corresponding to the sender's preferred equilibrium can be found by setting $b = \gamma_3$, which yields the expression in the statement of the claim.

\subsubsection{Proof of \autoref{corollary:ternary-sufficient}}
Recall that $\gamma_1 := 0$ and $u_0$ is normalized to zero; hence when $n = 3$, Equation (4) in the main text reduces to
\begin{equation} \label{n3n}
	\frac{u_3-u_2}{\gamma_3-\gamma_2}>\frac{u_2}{\gamma_2 - \max\left\{0, h\left(\gamma_2 ; \gamma_3\right)\right\}} 
\end{equation}
And because $h(\gamma_2 ; \gamma_3) \ge 0$, the right-hand side of \eqref{n3n} further reduces to $u_2 /(\gamma_2-h\left(\gamma_2 ; \gamma_3\right))$. Then since $f$ is increasing, $\gamma_3-\gamma_2 \ge \gamma_2-h\left(\gamma_2 ; \gamma_3\right)$; thus, if $u_3 - u_2 > u_2$, or $u_3 > 2u_2$, \eqref{n3n} must hold. Consequently, by Proposition 2, every bi-pooling solution can be implemented. By \autoref{imc}, all bi-pooling solutions induce the same bi-pooling partition. Then because the set of bi-pooling solutions is the set of extreme points of the solution correspondence of the information design problem, all commitment solutions must be associated with the same bi-pooling partition. Thus, every commitment outcome is implementable.

\subsubsection{Proof of \autoref{msm}}
Let $\omega_q$ denote the cutoff quality that the buyer is indifferent between purchasing $q-1$ and $q$ units: it solves
\[
\omega_q U(q) - p q = \omega_q U(q-1) - p (q-1),
\]
so $\omega_q = p/[U(q) - U(q-1)]$. Letting $\omega_0 = 0$ and $\omega_{n+1} = 1$, the buyer buys $q \in \{0, 1, \ldots, n\}$ units of the product if and only if $\omega \in [\omega_{q}, \omega_{q+1}]$. If $P^U > 2A^U$, $U''(q)/[U'(q)]^2$ is strictly decreasing in $q$, and thus
\[
\omega_{q+1} - \omega_{q} = \frac{p}{U(q+1)-U(q)}-\frac{p}{U(q)-U(q-1)}
\]
is strictly decreasing in $q$. Since the seller's gain from the buyer purchasing one more unit is $p - c$, \autoref{msm} follows from \autoref{corollary:ternary-sufficient}.

\subsubsection{Proof of \autoref{claim:influencing-voters}}
If either $\alpha_3$ increases, or $\lambda_3$ increases, or both, $\gamma_3$ decreases. By Proposition 3, when no commitment outcome can be implemented, the sender's ex-ante payoff is given by 
\[
\bar{V}(\gamma_3) = u_2 [F(\gamma_3)-F(h(\gamma_3))] + u_3 [1-F(\gamma_3) + F(h(\gamma_3)) - F(y(\gamma_3))],
\]
where $h$ and $y$ are implicitly defined by
\[
\mathbb{E}\left[\omega \mid \omega \in\left[h, \gamma_3\right]\right]=\gamma_2 \quad 
	\text{and} \quad \mathbb{E}\left[\omega \mid \omega \in\left[y, h\right] \cup\left[\gamma_3, 1\right]\right]=\gamma_3.
\]
Now
\begin{equation} \label{csd1}
    \bar{V}'(\gamma_3) = \left(u_2-u_3\right)\left[f\left(\gamma_3\right)-f(h) \frac{\mathrm{d} h}{\mathrm{d} \gamma_3}\right]-u_3 f(y) \frac{\mathrm{d} y}{\mathrm{d} \gamma_3}.
\end{equation}

By the implicit function theorem, 
\begin{align}
    \frac{\mathrm{d} h}{\mathrm{d} \gamma_3} & = -\frac{\left(\gamma_3-\gamma_2\right) f\left(\gamma_3\right)}{\left(\gamma_2-h\right) f(h)}, \label{csd3} \\
    \frac{\mathrm{d} y}{\mathrm{d} \gamma_3} & = \frac{\left(\gamma_2-h\right)\left[1-f\left(\gamma_3\right)+F(h)-F(y)\right]}{\left(\gamma_2-h\right)\left(\gamma_3-y\right) f(y)}-\frac{\left(\gamma_3-h\right)\left(\gamma_3-\gamma_2\right) f\left(\gamma_3\right)}{\left(\gamma_2-h\right)\left(\gamma_3-y\right) f(y)}. \label{csd5} 
\end{align}
Plug (\ref{csd3}) and (\ref{csd5}) into (\ref{csd1}), 
\begin{align*}
    \bar{V}'(\gamma_3) & = -\frac{\left(\gamma_2-h\right)\left[1-f\left(\gamma_3\right)+F(h)-F(y)\right] u_3}{\left(\gamma_2-h\right)\left(\gamma_3-y\right)}-\frac{\left(\gamma_3-h\right)\left(\gamma_2-y\right) f\left(\gamma_3\right)}{\left(\gamma_2-h\right)\left(\gamma_3-y\right)} u_3 \\
    & \quad +\frac{\left(\gamma_3-h\right)\left(\gamma_3-y\right) f\left(\gamma_3\right)}{\left(\gamma_2-h\right)\left(\gamma_3-y\right)} u_2,
\end{align*}
whose sign is determined by
\begin{equation} \label{css}
-\left(\gamma_2-h\right)\left(1-F\left(\gamma_2\right)+F(h)-F(y)\right)-\left(\gamma_3-h\right) f\left(\gamma_3\right)\left[\left(\gamma_2-y\right) u_3-\left(\gamma_2-y\right) u_2\right].
\end{equation}
By \autoref{ntm}, if no comment outcome is implementable, it must be that $(\gamma_2-y) u_3 \le (\gamma_3 - y) u_2$. 
Consequently, the sign of the second term of (\ref{css}) must be positive, and the first term has a strictly negative sign. Hence as $\gamma_3$ decreases, the expert's ex-ante payoff in her preferred equilibrium strictly decreases if the second term is larger in absolute value, which establishes the statement.

\subsection{Equilibrium Refinement} \label{refinement}
It is natural to ask whether the equilibria of the disclosure game considered in this paper are credible in that they survive certain equilibrium refinements. We consider the following two equilibrium refinements:
\begin{itemize}
    \item The Never-a-Weak-Best-Response (NWBR) Criterion, proposed by \cite{chokreps1987}, is a strengthening of a few equilibrium refinements that are extensively used in the literature, which includes the Intuitive Criterion, D1, and D2.\footnote{Although closely related, this is not precisely the same as the NWBR property proposed by \cite{km86}. For this reason, \cite{fudenberg1991game} call it ``NWBR in signaling games.''}

    \item The Grossman-Perry-Farrell equilibrium, proposed by \cite{bertomeu2018verifiable}, is based on the perfect sequential equilibrium of \cite{grossman1986perfect} and neologism-proofness of \cite{farrell1993meaning}.
\end{itemize}
It can be shown that every PE of the disclosure game we study is a Grossman-Perry-Farrell equilibrium, and every PE outcome survives the NWBR Criterion. We use the term ``type'' instead of ``state'' henceforth to ease exposition.

\subsubsection{Never-a-Weak-Best-Response (NWBR) Criterion}
We introduce some notation first. For any $m \in \mathcal{C}$, let $MBR(m)$ denote the set of all mixed strategy best responses for the receiver to message $m$ for any belief $\beliefm(\cdot \mid m)$.\footnote{Note that $\beliefm$ must satisfy $\supp{\beliefm(\cdot \mid m)} \subseteq m$.} Moreover, let $v^*_\omega$ denote the equilibrium payoff of type $\omega$. Finally, for any equilibrium and an off-path message $m$, define
\[
 D(\omega,m) = \left\{\rho \in MBR(m) : v^*_\omega < \sum_{i \in N} u_i \rho_i \right\},
\]
and
\[
 D^0(\omega,m) = \left\{\rho \in MBR(m) : v^*_\omega = \sum_{i \in N} u_i \rho_i \right\};
\]
in words, $D(\omega,m)$ is the set of mixed strategy best responses that make type $\omega$ strictly prefer $m$ to her equilibrium message, and $D^0(\omega,m)$ is the set of mixed strategy best responses that make type $\omega$ exactly indifferent.

\begin{definition}
    An equilibrium $(\sigma, \tau, \beliefm)$ \textbf{survives the NWBR criterion} if for every $m \in \mathcal{C}$ and any $\omega, \omega' \in [0,1]$, $D^0(\omega',m) \subseteq \cup_{\omega \ne \omega'} D(\omega,m)$ implies that $\omega' \not \in \supp \beliefm \, (\cdot \mid m)$.
\end{definition}

\begin{claim}
    For every PE $(\sigma, \tau, \beliefm)$, there exists $\beliefm'$ such that $(\sigma, \tau, \beliefm')$ survives the NWBR criterion.
\end{claim}

\begin{proof}
    Fix a PE $(\sigma, \tau, \beliefm)$, and let $\mathcal{B}$ denote the associated partition. For every $m \not \in \mathcal{B}$, let 
    $\ell = \min\{i: m \cap B_i  \ne \varnothing\}$. Let $\beliefm'$ be such that $\beliefm'(\cdot \mid B_i) = \beliefm(\cdot \mid B_i)$ for all $i = 0, \ldots, n-1$, and for any $m \not \in \mathcal{B}$, let
    \begin{equation} \label{mbr}
        \beliefm' \, (\min m \cap B_\ell \mid m) = 1.
    \end{equation}
    By the definition of PE, the receiver never mixes on path, and hence for any $\omega' \in [0,1]$, $D^0(\omega',m) = \{\delta_k\}$ if and only if $\omega \in B_k \backslash \left(\cup_{i>k} B_i\right)$, where $\delta_k$ is the Dirac measure at action $k$. Furthermore, define
    \[M = \{k \in N : k \text{ is such that } m \cap B_k \ne \varnothing\};\]
    then the lowest action in $M$ is $\ell$. Now for any $\omega' \in [0,1]$, 
    \[\bigcup_{\omega \ne \omega'} D(\omega,m) = \{\rho \in \Delta (M) : \supp{\rho} \subseteq \{a_k, a_{k+1}\} \text{ with } k \ge \ell, \text{ and } \rho(a_{\ell}) < 1\}.\]
    Then $D^0(\omega',m) \subseteq \cup_{\omega \ne \omega'} D(\omega,m)$ if and only if $D^0(\omega',m) = \{\delta_k\}$ with $k > \ell$, which is in turn equivalent to $\omega' \in \cup_{k > \ell}\left(B_k \backslash \left(\cup_{i > k} B_i\right) \right)$. Then \eqref{mbr} implies that $\omega' \not \in \supp \beliefm' \, (\cdot \mid m)$. Consequently, $(\sigma, \tau, \beliefm')$ survives the NWBR criterion.
\end{proof}

\subsubsection{Grossman-Perry-Farrell Equilibrium}
\begin{definition}
    Fix a PE $(\sigma, \tau, \beliefm)$. Say that $m^* \in \mathcal{C}$ is a \textbf{self-signaling set} if\footnote{We could have defined a self-signaling set for any equilibrium, but doing that largely complicates the notation: here, $\sigma(\cdot)$ is a well-defined function that maps a state to a subset of the state space because $(\sigma, \tau, \beliefm)$ is an ORE.}
    \[m^* = \left\{\omega \in m^* : v\left(\mathbb{E}[\omega \,|\, \omega \in m^*]\right) > v\left(\mathbb{E}[\omega \,|\, \omega \in \sigma(\omega)]\right) \right\}.\]
    An PE is a \textbf{Grossman-Perry-Farrell equilibrium} if there does not exist a self-signaling set. 
\end{definition}

\begin{claim}
    Every PE is a Grossman-Perry-Farrell equilibrium.
\end{claim}

\begin{proof}
    Fix a PE $(\sigma, \tau, \beliefm)$, and let $\mathcal{B}$ denote the associated partition. Suppose there exists a self-signaling set $m^*$. Let
    \[\Bar{k}:= \max\left\{i \in N: \text{there exists } \omega \in m^* \text{ such that } \sigma(\omega) = B_i\right\}.\]
    Because $m^*$ is a self-signaling set, it must be that $\mathbb{E}[\omega \,|\, \omega \in m^*] \ge \gamma_{\bar{k}+1}$. But then revelation proofness of $\mathcal{B}$ implies that there must exist $\omega' \in m^*$ such that $\sigma(\omega') = B_j$ with $j \ge \bar{k}+1$, a contradiction. 
\end{proof}

\begin{singlespace}\small
\bibliographystyle{ecta}
\bibliography{wvi.bib}
\end{singlespace}

\end{document}